\theoremstyle{plain}\newtheorem{theorem}{Theorem}
\newtheorem{Lemma}{Lemma}
\newtheorem{Claim}{Claim}
\newtheorem{Definition}{Definition}
\newtheorem{Fact}{Fact}
\newtheorem{Remark}{Remark}
\newtheorem{Example}{Example}
\begin{document}
\begin{frontmatter}
\title{Query complexity of generalized Simon's problem}
\author[a1]{Zekun Ye}
\author[a1]{Yunqi Huang}
\author[a1,a2]{Lvzhou Li\corref{one}}
\author[a3,a4]{Yuyi Wang}

\cortext[one]{Corresponding author. Mobile: +8613802437672; Corresponding address: School of Computer Science and Engineering, Sun Yat-sen University, Guangzhou 510006,
 China\\
 \indent {\it E-mail
address:} yezekun@mail2.sysu.edu.cn (Z. Ye); lilvzh@mail.sysu.edu.cn (L.
Li)}

\address[a1]{Institute of Quantum Computing and Computer Science Theory, School of Computer
Science and Engineering, Sun Yat-sen University, Guangzhou 510006, China}
\address[a2]{Ministry of Education Key Laboratory of Machine Intelligence and Advanced Computing (Sun Yat-sen University), Guangzhou 510006, China}
\address[a3]{Disco Group, ETH Zurich, Switzerland}
\address[a4]{CRRC Zhuzhou Institute}

\begin{abstract}
Simon's problem plays an important role in the history of quantum algorithms, as it inspired Shor to discover the celebrated quantum algorithm solving integer factorization in polynomial time. Besides, the quantum algorithm for Simon's problem has been recently applied to break symmetric cryptosystems. Generalized Simon's problem, denoted by $\mathsf{GSP}(p,n,k)$, is a natural extension of Simon's problem: Given a function $f:\mathbb{Z}_p^n \to X$ where $X$ is a finite set and the promise that for any $x, y \in \mathbb{Z}_p^n, f(x) = f(y)$ iff $x - y \in S$ for a subgroup $S \le \mathbb{Z}_p^n$ of rank $k<n$, the goal is to find $S$. In this paper we consider the query complexity of the problem, that is, the minimum number of queries to $f$ required to find $S$. First, it is not difficult to design a quantum algorithm solving the above problem with query complexity of $O(n-k)$. However, so far it is not clear what is the classical query complexity of the problem, and revealing this complexity is necessary for clarifying the computational power gap between quantum and classical computing on the problem.

To tackle this problem, we prove that any classical (deterministic or randomized) algorithm for $\mathsf{GSP}(p,n,k)$ has to query at least $\Omega\left(\max\{k, \sqrt{p^{n-k}}\}\right)$ values and any classical nonadaptive deterministic algorithm for $\mathsf{GSP}(p,n,k)$ has to query at least $\Omega\left(\max\{k, \sqrt{k \cdot p^{n-k}}\}\right)$ values. Hence, we clearly show the classical computing model is less powerful than the quantum counterpart, in terms of query complexity for the generalized Simon's problem. Moreover, we obtain an upper bound $O\left(\max\{k, \sqrt{k \cdot p^{n-k}}\}\right)$ on the classical deterministic query complexity of $\mathsf{GSP}(p,n,k)$, by devising a subtle classical algorithm based on group theory and the divide-and-conquer approach. Therefore, we have an almost full characterization of the classical deterministic query complexity of the generalized Simon’s problem.

\end{abstract}

\begin{keyword}
Simon's problem, query complexity, quantum computing, group theory
\end{keyword}

\end{frontmatter}

	
\section{Introduction}
Query complexity, also called decision tree complexity \cite{buhrman2002complexity}, is the computational complexity of a problem or algorithm expressed in terms of the decision tree model. It has been very useful for understanding the power of different computational models. In contrast to the Turing machine world where lower bounds and separations between complexity classes often have to rely on unproven conjectures, using query complexity can often prove tight lower bounds and have provable separations between different computational models. For instance, quantum computing has been shown to have exponential advantages over classical computing in terms of query complexity. 

Simon's problem is well known in the history of quantum algorithms. It shows an exponential gap between classical and quantum computing. The definition of Simon's problem is as follows. 
\begin{tcolorbox}
			\label{Simon's problem}
			\noindent
			\textbf{Given:} An (unknown) function $f:\{0,1\}^n\to \{0,1\}^m$.
			\\
			\\
			\textbf{Promise:} There exists a nonzero element $s\in \{0,1\}^n$ such that for all $g,h\in\{0,1\}^n, f(g) = f(h)$ iff $g = h$ or $g = h \oplus s$.
			\\
			\\
			\textbf{Problem:} Find $s$.
		\end{tcolorbox}
Simon \cite{simon1994power} proposed a quantum algorithm to efficiently find $s$. Shortly afterward, inspired by this algorithm, Shor discovered the celebrated quantum algorithm for the integer factorization problem \cite{shor1994algorithms}. Actually, Simon's problem and its variants have attracted a lot of attention from the academic community \cite{alagic2007quantum,brassard1997exact,cai2018optimal,kaye2006introduction,koiran2005quantum,mihara2003deterministic,santoli2016using,wu2019quantum}. More details will be introduced in the section of related work soon.

\subsection{Problem statement and our results}\label{Problem}
Several extended versions of Simon's problem with a minor difference 	have been studied from the viewpoint of quantum computing \cite{alagic2007quantum,brassard1997exact,kaye2006introduction,mihara2003deterministic,wu2019quantum}. An extended version of Simon's problem can be described as: Given an (unknown) function $f:\mathbb{Z}_2^n \to X$ with  a finite set $X$ and a positive integer $k <n$, it is promised that there exists a subgroup $S \le \mathbb{Z}_2^n$ of rank $k$ such that for any $x, y \in \mathbb{Z}_2^n, f(x) = f(y)$ iff $x \oplus y \in S$, and then the goal is to find $S$. This problem is a natural extension since Simon's problem is a special case of it with $k=1$.

\begin{Example}
		We present an example of the above generalized Simon's problem in Table \ref{table:ex}. In this example, $n = 4$, $k = 2$, and we can find that $S = \{0000, 0011, 0110, 0101\}$.
		
		\begin{table}[H]
			\caption{An example of the generalization of Simon's problem}
			\label{table:ex}
			\centering
			\begin{tabular}{cccc|c}
				\hline
				\multicolumn {4}{c|}{$x$} &$f(x)$ \\
				\hline
				0000 & 0011& 0110 & 0101 & 0000\\
				0001 & 0010 & 0111& 0100 & 0001\\
				1000 & 1011 &1110 & 1101 & 0010\\
				1001 & 1010 & 1111 & 1100 & 0011 \\
				\hline
			\end{tabular}
		\end{table}
		
	\end{Example}

In this paper, we consider a more general extension of Simon's problem, called a \textit{generalized Simon's problem} as follows. In this case, we generalize the domain of $f$ from $\mathbb{Z}_2^n$ to $\mathbb{Z}_p^n$, where $p$ is a prime\footnote{{Our method relies on the fact that $\mathbb{Z}_p^n$ is a vector space over the finite field $\mathbb{Z}_p$. Note that $\mathbb{Z}_p$ is a finite field if and only if $p$ is a prime.}}.
\begin{tcolorbox}
			\label{generalized Simon's problem}
			\noindent
			\textbf{Given:} An (unknown) function $f:\mathbb{Z}_p^n \to X$, where $X$ is a finite set, and a positive integer $k <n$.
			\\
			\\
			\textbf{Promise:} There exists a subgroup $S \le \mathbb{Z}_p^n$ of rank $k$ such that for any $x, y \in \mathbb{Z}_p^n, f(x) = f(y)$ iff $x - y \in S$.
			\\
			\\
			\textbf{Problem:} Find $S$.
		\end{tcolorbox}
	
	For convenience, we use $\mathsf{GSP}(p,n,k)$ to denote the above problem with parameters $n, p, k$ throughout this paper. 
	It is easy to see that Simon's problem is a special case of $\mathsf{GSP}(p, n, k)$ with $p = 2$ and $k = 1$.

	To find the subgroup $S$, we need to design an algorithm that is allowed to access the function $f$ by querying an oracle that, given $x$, outputs $f(x)$. According to randomness, algorithms can be divided into deterministic algorithms and randomized algorithms. Deterministic algorithms solve the problem with certainty, whereas randomized algorithms solve the problem with bounded error probability. Moreover, we also study a class of widely studied algorithms, nonadaptive algorithms, where current queries are not allowed to depend on the result of previous queries.
	
	The {(nonadaptive)} deterministic query complexity of $\mathsf{GSP}(p,n,k)$ is the query complexity of the optimal {(nonadaptive)} deterministic algorithm for that, and the query complexity of a {(nonadaptive)} deterministic algorithm is the number of queries it makes on the worst-case input. {Similarly, the randomized query complexities of $\mathsf{GSP}(p,n,k)$ is the query complexity of the optimal randomized algorithm for solving $\mathsf{GSP}(p,n,k)$ with bounded error, and the query complexity of a randomized algorithm is the maximum number of queries it makes, with the maximum taken over  both the choices of input and the internal randomness of the algorithm. A more detailed introduction about query complexity can be referred to Ref. \cite{buhrman2002complexity}.} 
	
	
  In this paper, we obtain some characterizations of the query complexity of $\mathsf{GSP}(p,n,k)$ in the following theorems. 
   

	\begin{theorem} [Lower bound]
		\label{Th:lower}
		Any classical (deterministic or randomized) algorithm solving $\mathsf{GSP}(p,n,k)$ needs to make $\Omega\left(\max\{k, \sqrt{p^{n-k}}\}\right)$ queries.
	\end{theorem}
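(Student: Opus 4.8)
The plan is to establish the two lower bounds $\Omega(k)$ and $\Omega(\sqrt{p^{n-k}})$ separately and take their maximum. In both cases I would invoke Yao's minimax principle: it suffices to exhibit a distribution on inputs under which every deterministic algorithm making few queries fails with probability bounded away from $0$, and this simultaneously handles the deterministic and the bounded-error randomized settings. The hard distribution I would use is the uniform distribution over the rank-$k$ subgroups $S \le \mathbb{Z}_p^n$, each paired with the canonical instance $f_S(x) = x + S \in \mathbb{Z}_p^n/S$, which satisfies the promise exactly and has image of size precisely $p^{n-k}$. Throughout I would rely on two standard counting facts: the number of rank-$k$ subgroups equals the Gaussian binomial $\binom{n}{k}_p \ge p^{k(n-k)}$, and the fraction of them containing a fixed nonzero vector equals $\frac{p^{k}-1}{p^{n}-1}\le p^{-(n-k)}$.

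For the bound $\Omega(k)$ I would count leaves of the decision tree. Since every oracle answer is an element of the image of $f$, whose size is at most $p^{n-k}$, a deterministic algorithm of depth $t$ reaches at most $(p^{n-k})^{t}$ leaves, and each leaf can output only a single subgroup. Under the uniform distribution the success probability is therefore at most $(p^{n-k})^{t}/\binom{n}{k}_p \le p^{(n-k)(t-k)}$, which is below $2/3$ as soon as $t \le k-1$. Hence $t=\Omega(k)$, the deterministic case being the special case of a single tree.

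For the bound $\Omega(\sqrt{p^{n-k}})$ I would run a birthday/collision argument against the same distribution. The key observation is that as long as no two queried points have collided, the transcript (all values distinct) is independent of $S$, so the adaptively chosen query points are in fact a fixed sequence, and the randomness of $S$ enters only through whether a collision eventually occurs. By a union bound over the at most $\binom{t}{2}$ pairs, each of which lies in $S$ with probability at most $p^{-(n-k)}$, the probability of seeing any collision within $t$ queries is at most $\binom{t}{2}p^{-(n-k)}$, which is at most, say, $1/8$ once $t \le \tfrac12\sqrt{p^{n-k}}$. Conditioned on the no-collision event the posterior on $S$ is uniform over the at least $\binom{n}{k}_p\bigl(1-\binom{t}{2}p^{-(n-k)}\bigr)$ subgroups avoiding every observed difference, a number far exceeding $1$, so any fixed guess is correct with negligible probability. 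Combining the two contributions keeps the overall success probability below $2/3$, forcing $t=\Omega(\sqrt{p^{n-k}})$.

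The step I expect to be most delicate is the treatment of adaptivity in the collision argument: making rigorous the claim that, conditioned on the no-collision event, the adaptive query sequence may be replaced by a fixed one (a lazy-sampling/coupling argument on the posterior of $S$), and then controlling the two Gaussian-binomial estimates uniformly across the regime $k<n$. The remaining work, namely verifying $\binom{n}{k}_p \ge p^{k(n-k)}$ and the single-vector containment ratio $\frac{p^{k}-1}{p^{n}-1}\le p^{-(n-k)}$, is routine.
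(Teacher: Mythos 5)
Your proposal is correct in substance but follows a genuinely different route from the paper on both halves of the bound. For the $\Omega(\sqrt{p^{n-k}})$ part, the paper does not use Yao's principle plus a birthday bound: it reduces $\mathsf{GSP}(p,n,k)$ to computing the hidden subgroup from the characteristic function $f_S$ (with $f_S(x)=1$ iff $x\in S$), observing that one query to $f_S$ costs a pair of queries to $f$ — which is where the square root comes from — and then applies the relational adversary method with the all-ones weight function, using exactly the two counting quantities you invoke ($T_1$, the number of rank-$k$ subgroups, and $T_2$, the number containing a fixed nonzero vector, with $T_1/T_2=(p^n-1)/(p^k-1)>p^{n-k}$). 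For the $\Omega(k)$ part, the paper counts the subgroups still consistent with the at most $k/2$ pairwise differences of queried points, whereas you count leaves of the decision tree; both are counting arguments and both work (your estimate $\binom{n}{k}_p\ge p^{k(n-k)}$ and the resulting bound $p^{(n-k)(t-k)}$ on the success probability are correct). Your route is more elementary and self-contained; the paper's adversary-method route packages the adaptivity issue — precisely the step you flag as delicate — into a black-box theorem.

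The one place your write-up genuinely needs repair is the choice of hard distribution: you cannot use the single ``canonical'' instance $f_S(x)=x+S$ for both halves. For the leaf count you need all the $f_S$ to share a common answer alphabet of size $p^{n-k}$, so you must fix a canonical bijection of each quotient $\mathbb{Z}_p^n/S$ with $\{0,\dots,p^{n-k}-1\}$; that is fine there, since the leaf count is indifferent to how informative the answers are. But for the collision argument, any such canonical labelling leaks information about $S$ even when no collision occurs (already for $p=2$, $n=3$, $k=1$, labelling cosets by their lexicographically smallest representative makes the single value $f_S(011)$ distinguish $s=100$ from $s=001$), so the assertion that the no-collision transcript is independent of $S$ is false for that instance family. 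The standard fix is to take $X$ large and label the cosets by a uniformly random injection into $X$, after which the lazy-sampling argument you sketch does go through and the union bound over $\binom{t}{2}$ pairs, each lying in $S$ with probability $T_2/T_1\le p^{-(n-k)}$, yields the claim. With that adjustment your proof is complete.
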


	\begin{theorem} [Lower bound]
		\label{Th:lower_non}
		Any nonadaptive classical deterministic algorithm solving $\mathsf{GSP}(p,n,k)$ needs $\Omega\left(\max\{k,\sqrt{k\cdot p^{n-k}}\}\right)$ queries.
	\end{theorem}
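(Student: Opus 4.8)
The plan is to first reduce the nonadaptive deterministic setting to a clean combinatorial statement about subgroups, and then prove that statement by a first-moment plus pigeonhole argument. A nonadaptive deterministic algorithm is nothing but a fixed query set $Q \subseteq \mathbb{Z}_p^n$ together with a decoding map; writing $\Delta(Q) = \{x - y : x,y \in Q\}$, the only information the algorithm ever extracts about the hidden subgroup $S$ is the fiber partition of $Q$, which is equivalent to the trace $T_S := S \cap \Delta(Q)$ (it records exactly which differences of queried points lie in $S$). Hence such an algorithm with query set $Q$ is correct if and only if the map $S \mapsto T_S$ is injective over all rank-$k$ subgroups $S \le \mathbb{Z}_p^n$. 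Since a nonadaptive deterministic algorithm is in particular a classical algorithm, the $\Omega(k)$ term is already supplied by Theorem~\ref{Th:lower}, so it remains to establish the $\Omega(\sqrt{k\cdot p^{n-k}})$ term. Concretely, I would show that if $m := |Q| \le c\sqrt{k\cdot p^{n-k}}$ for a small absolute constant $c$, then $S \mapsto T_S$ fails to be injective, i.e. two distinct rank-$k$ subgroups induce the same partition of $Q$, so the algorithm must err on one of them.

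The first step is a first-moment computation over a uniformly random rank-$k$ subgroup $S$. For a fixed nonzero $d$, the symmetry of $\mathrm{GL}_n(\mathbb{F}_p)$ gives $\Pr[d \in S] = \binom{n-1}{k-1}_p / \binom{n}{k}_p = (p^k-1)/(p^n-1)$, so using $|\Delta(Q)\setminus\{0\}| \le m^2$ one obtains
\[
\mu := \mathbb{E}\,|T_S| = |\Delta(Q)\setminus\{0\}|\cdot\frac{p^k-1}{p^n-1} \le 2\,m^2\,p^{k-n}.
\]
Choosing $c$ small enough forces $\mu$ to be a small multiple of $k$, and Markov's inequality then guarantees that at least half of the $N := \binom{n}{k}_p$ rank-$k$ subgroups have trace size $|T_S| \le J$ for some $J = O(k)$. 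The second step counts how many distinct traces of size at most $J$ can possibly occur: this is at most $\sum_{j\le J}\binom{m^2}{j} \le (J+1)\,(e\,m^2/J)^{J}$. Plugging in $m^2 \le c^2 k\,p^{n-k}$ and $J = \Theta(k)$ makes the base $e\,m^2/J$ proportional to $p^{n-k}$, so this count is of the form $(p^{n-k})^{\Theta(k)}$ with a controllable exponent, whereas the number of subgroups with small trace is at least $N/2 \ge \tfrac12 p^{k(n-k)} = \tfrac12 (p^{n-k})^{k}$. If the exponent on the trace side is kept strictly below $k$, pigeonhole forces two of these subgroups to share the same trace, contradicting injectivity; this is exactly what choosing $c$ below a fixed threshold achieves.

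The hard part, and where the factor $\sqrt{k}$ is won or lost, is making the counting inequality hold with an absolute constant uniformly over all admissible $n,k,p$. The delicate points are bounding the Gaussian binomial from below by $p^{k(n-k)}$ and the number of admissible traces from above, so that the competing exponents are $O(c^2)\cdot k(n-k)\log p$ versus $k(n-k)\log p$, and in particular ensuring that the lower-order contributions (the factor coming from $\binom{m^2}{J}$, the polynomial factor $J+1$, and the degenerate regimes $\mu<1$ or $k$ close to $n$) do not erode the gap. Here it is essential to use the sharp estimate $\binom{m^2}{J}\le (e\,m^2/J)^J$ rather than the crude bound $(m^2)^J$, since only the former cancels a spurious $\log k$ loss and lets the constant be chosen independently of $k$. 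Intuitively the whole argument quantifies the statement that pinning down a rank-$k$ subgroup nonadaptively requires $k$ essentially independent collisions, each of which, by the birthday principle in a space of $p^{n-k}$ cosets, costs $\Theta(\sqrt{p^{n-k}})$ queries that cannot be fully shared, yielding $m = \Omega(\sqrt{k\cdot p^{n-k}})$.
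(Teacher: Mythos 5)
Your proposal is correct in substance but takes a genuinely different route for the main term. The reduction to injectivity of the trace map $S \mapsto S \cap \Delta(Q)$ is exactly the right framing, and your first-moment computation $\Pr[d \in S] = (p^k-1)/(p^n-1)$ is the same double-counting fact the paper establishes via Claims \ref{claim:numsubgroup1} and \ref{claim:numsubgroup2}. Where you diverge is in how non-injectivity is extracted. The paper applies the averaging argument one rank higher (Claim \ref{Claim:numjoin1} with rank $k+1$ and threshold $d=k$): if $|T_Q| < k(p^n-1)/(p^{k+1}-1)$, some rank-$(k{+}1)$ subgroup $\hat S$ meets $T_Q$ in fewer than $k$ elements, so $\langle \hat S \cap T_Q\rangle$ has rank $<k$, and \emph{any} two rank-$k$ subgroups sandwiched between $\langle \hat S \cap T_Q\rangle$ and $\hat S$ have identical trace $\hat S \cap T_Q$ --- an explicit collision with no constant-chasing. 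You instead stay at rank $k$, use Markov to get many subgroups with trace size $O(k)$, upper-bound the number of possible traces by $\sum_{j\le J}\binom{m^2}{j}$, and pigeonhole against the Gaussian binomial $\binom{n}{k}_p \ge p^{k(n-k)}$. I checked that your counting inequality does close: with $c^2 = 1/8$ the comparison $p^{(1-4c^2)k(n-k)} > 2(J+1)$ holds for all $p\ge 2$, $n>k$ once $k$ exceeds an absolute constant, and for $k=O(1)$ the target bound $\Omega(\sqrt{k\,p^{n-k}})=\Omega(\sqrt{p^{n-k}})$ is already supplied by Theorem \ref{Th:lower}, so the delicate regime you flag is genuinely harmless. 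The trade-off: your entropy-style argument is more robust (it would survive weakenings of the algebraic structure, and it avoids the mild $1/\sqrt{p}$ slack hidden in the paper's final asymptotic step), while the paper's sandwich construction is shorter, fully explicit about which two inputs fool the algorithm, and needs no uniformity-in-$(n,k,p)$ bookkeeping. One small point to make explicit in a final write-up: from equal traces you should still exhibit the two functions $f_1,f_2$ agreeing on $Q$ (as the paper does), since the algorithm sees values of $f$, not just the fiber partition; this is routine but should not be left implicit.
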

	
	\begin{theorem}[Upper bound] 
		\label{Th:upper}
		There exists a classical deterministic algorithm to solve $\mathsf{GSP}(p,n,k)$ using $O\left(\max\{k,\sqrt{k\cdot p^{n-k}}\}\right)$ queries.
	\end{theorem}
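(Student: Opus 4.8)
The plan is to reduce the search for the rank-$k$ subgroup $S$ to the repeated discovery of single independent elements of $S$, and to organize these discoveries so that queries are reused across stages. Two elementary observations drive everything. First, since the promise makes $f(x)=f(y)$ equivalent to $x-y\in S$, any collision among queried points immediately yields an element of $S$ (its difference), and conversely testing whether a candidate $v$ lies in $S$ costs a single query by comparing $f(v)$ with $f(0)$. Second, the problem is self-reducible through quotients: once we have found a subgroup $S_0\le S$ of rank $r$ (spanned by collision-differences already in hand), $f$ is constant on $S_0$-cosets, so it descends to a well-defined function on $\mathbb{Z}_p^n/S_0\cong\mathbb{Z}_p^{n-r}$ whose hidden subgroup is $S/S_0$ of rank $k-r$. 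Crucially this reduction preserves the codimension: the number of fibres stays $p^{n-k}$ at every stage, so the hard parameter $d:=n-k$ never changes.

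The core subroutine finds fresh elements of $S$ by a structured, deterministic collision search. Working in the current quotient (a space of the form $\mathbb{Z}_p^{N}$ with hidden subgroup of codimension $d$), I would query two subspaces and read off all cross-collisions: if $V=A\oplus B$ has dimension $v$ and we query representatives of all of $A$ and all of $B$, then every collision between the two lists is an element of $S\cap V$, and conversely every element of $S\cap V$ is detected, at cost only $|A|+|B|=O(p^{v/2})$ rather than $|V|=p^{v}$. Since $\dim(S\cap V)\ge v-d$, taking $v$ slightly above $d$ guarantees at least one new independent element. This is the meet-in-the-middle / birthday mechanism, made deterministic by the group structure.

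The efficiency --- hitting the target $O(\sqrt{k\,p^{n-k}})$ rather than the naive $O(k\sqrt{p^{n-k}})$ of running the subroutine $k$ times from scratch --- comes from reuse. Rather than restarting in each quotient, I would maintain one growing pool $Q$ of queried points and extend $S_0$ by pairing a small batch of new points against the entire accumulated pool: with $m=|Q|$ points already in hand, a batch of about $p^{d}/m$ new points supplies $\approx p^{d}$ fresh cross-differences, enough to force and expose a new element of $S$. This gives the recursion $m_{r+1}^2\approx m_r^2+O(p^{d})$, so after recovering all $k$ independent elements the pool has size $m_k=O(\sqrt{k\,p^{d}})=O(\sqrt{k\,p^{n-k}})$, telescoping to the claimed bound. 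The residual regime $p^{n-k}\le k$, where the target degrades to $O(k)$, I would treat separately: there the fibres are so few that a direct linear-algebraic procedure querying $O(n)=O(k)$ structured points recovers $S$.

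The main obstacle is the deterministic worst-case guarantee underlying the reuse step. Against an adaptive adversary choosing $f$, a batch of new queries is only certain to yield a fresh independent collision if the corresponding difference vectors block every rank-$(k-r)$ subgroup still consistent with the answers seen so far; this is a covering/blocking-set condition for subspaces over $\mathbb{F}_p$. Establishing it with difference sets spread out enough to need only $O(\sqrt{p^{n-k}})$ points per accumulated stage --- and in particular avoiding both the spurious factor $\sqrt{p}$ that a crude full-subspace meet-in-the-middle incurs and the $\sqrt{n-k}$ factor that a nonadaptive probabilistic construction would incur --- is the delicate part, and is exactly where the group-theoretic divide-and-conquer and a careful adaptive schedule are needed. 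I expect the proof to spend most of its effort here, verifying that the accumulated differences always attain the correct rank and bounding the query count by the telescoping sum above.
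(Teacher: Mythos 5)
You have the right skeleton --- cross-collisions between two queried subgroups, a rank-counting pigeonhole to force a collision, and amortization over the $k$ generators to turn $k\sqrt{p^{n-k}}$ into $\sqrt{k\,p^{n-k}}$ --- and this is indeed the shape of the paper's argument. But the step you yourself flag as ``the delicate part'' is the entire content of the theorem, and your proposal does not supply it. In particular, the claim that a batch of $\approx p^{d}/m$ new points against an unstructured pool of size $m$ yields $\approx p^{d}$ fresh differences and is therefore ``enough to force and expose a new element of $S$'' is not a deterministic guarantee: $S$ has index $p^{d}$ in $\mathbb{Z}_p^n$, so its complement contains $p^{n}-p^{k}$ elements and an adversary can easily arrange for $p^{d}$ arbitrary nonzero differences to all avoid $S$. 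A collision is forced only if the difference set of the batch against the pool \emph{exactly covers} a set that provably meets $S$, and that requires the pool and the batch to retain group structure --- which your growing, heterogeneous pool (and its images under successive quotients) does not.

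The paper's resolution is precisely to impose that structure. It first constructs two subgroups $A$ (rank $d$) and $B$ (rank $n-k-d$) with $V=A+B$ of rank exactly $n-k$ and $V\cap S=\{0^n\}$, querying all of $A$ and all of $B$; then for each basis vector $w_i$ of a complement of $V+S_2$ it queries only the coset $A_{w_i}$ ($p^{d}$ points). Since $A_{w_i}-B=V_{w_i}$ is a \emph{full coset} of $V$, and any coset $V_w$ with $w\notin V$ must intersect $S$ (because $\langle V\cup w\rangle$ has rank $n-k+1$ and cannot intersect $S$ trivially), a collision between $A_{w_i}$ and the already-queried $B$ is guaranteed, yielding one new generator per batch; balancing $p^{n-k-d}$ against $k\cdot p^{d}$ gives the bound. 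The second nontrivial ingredient you are missing is that building a collision-free complement $V$ is itself not free: the recursive subroutine may stumble on unwanted collisions while extending $B$, and the paper charges each such overrun (at most $p^{d}$ extra queries) to an increase in the rank of the already-discovered subgroup of $S$, which can happen at most $k$ times. Without these two mechanisms --- coset-structured batches against a subgroup pool, and the rank-increase charging scheme --- the telescoping bound you write down is an accounting identity with no worst-case guarantee behind it.
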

	
	\begin{Remark}
		When $p=2$ and $k = 1$, the query complexity is $\Theta(\sqrt{2^{n}})$ which is the same as the result presented in \cite{cai2018optimal,Wolf2019quantum}. 
	\end{Remark}
	
	To prove the theorems above, we first use the double-counting method and the adversary method to obtain the lower bounds. Furthermore, we design a classical deterministic algorithm, obtaining the upper bound of the query complexity. Our algorithm is based on group theory and the divide-and-conquer technique. 

\begin{table}[H]
			\caption{Known results about the query complexity of Simon's problem and generalized Simon's problem. The randomized, deterministic and nonadaptive deterministic query complexities of $\mathsf{GSP}(p,n,k)$ are obtained in this paper.}
			\label{table: main results}
			\centering
			\scalebox{0.63}{
			\begin{tabular}{c|c|c|c}
				\hline
				\ & quantum & randomized/deterministic & nonadaptive deterministic\\
				\hline
				Simon's problem & $\Theta(n)$ \cite{brassard1997exact, cai2018optimal, koiran2005quantum, mihara2003deterministic, simon1994power} &
				\multicolumn {2}{c}{$\Theta(\sqrt{2^n})$\cite{cai2018optimal,Wolf2019quantum}} \\
				\hline
				$\mathsf{GSP}(p,n,k)$ & $\Theta(n-k)$\tablefootnote{The work in \cite{Hirvensalo2001quantum} implies that $\mathsf{GSP}(p,n,k)$ can be solved in $O(n-k)$ queries by a quantum algorithm with bounded error. Moreover, it can be solved by an exact quantum algorithm with $O(n-k)$ queries, if we slightly adjust the algorithms in \cite{brassard1997exact} (see \ref{exact quantum algorithm}). Additionally, by generalizing the lower bound method of \cite{koiran2005quantum}, it was found that any quantum algorithm for $\mathsf{GSP}(p,n,k)$ has to query at least $\Omega(n-k)$ times with $p=2$ \cite{wu2019quantum}. It is trivial to obtain the same result for general $p$.}
				& 
				$\Omega\left(\max\{k, \sqrt{p^{n-k}}\}\right)$, $O\left(\max\{k,\sqrt{k\cdot p^{n-k}}\}\right)$ & $\Omega\left(\max\{k, \sqrt{k\cdot p^{n-k}}\}\right)$\\
				\hline
			\end{tabular}
			}
		\end{table}

	\subsection{Motivation}
	Simon's problem plays a key role in the history of quantum algorithms and recently it has been found useful for applications in cryptography. First, the quantum algorithm \cite{simon1994power} for this problem inspired the discovery of Shor's algorithm \cite{shor1994algorithms} solving integer factorization in polynomial time, and also inspired Buhrman et al.\ \cite{Buhrman2008quantum} to propose the first exponential speed-up for quantum property testing. 
	Furthermore, Simon's algorithm can be used to show the insecurity of commonly used cryptographic symmetric-key primitives. For example, Kaplan et al. \cite{kaplan2016breaking} showed that several classical attacks based on finding collisions can be exponentially speeded up using Simon’s algorithm. 
	Moreover, it was used to break the 3-round Feistel construction \cite{Kuwakado2010quantum} and then to prove that the Even-Mansour construction \cite{Kuwakado2012security} is insecure with superposition queries. Also, it can be used to quantum related-key attacks \cite{roetteler2015note}.

	
	$\mathsf{GSP}(p,n,k)$ is a natural extended version of Simon's problem and also a special case of the hidden subgroup problem. The motivations for studying $\mathsf{GSP}(p,n,k)$ are as follows.
	
	\begin{itemize}
		\item First, from the viewpoint of quantum computing, characterizing the classical query complexity of $\mathsf{GSP}(p,n,k)$ is necessary for clarifying the computational power gap between quantum and classical computing on this problem. Note that we have a comprehensive understanding of the quantum query complexity for $\mathsf{GSP}(p,n,k)$ (see Table \ref{table: main results}). However, as far as we know, how well a classical algorithm performs on this problem still needs to be explored. 
		
		\item Second, from the viewpoint of classical computing, it is an interesting problem to reveal the query complexity of $\mathsf{GSP}(p,n,k)$, as it generalizes the well-studied Simon's problem and the ideas to solve the generalized problems are completely different from that for the original one. $\mathsf{GSP}(p,n,k)$ is a more general problem in mathematics, and the hidden law behind it is worth exploring. Based on our in-depth study, we find this problem seems not so straightforward to solve, instead, it involves several techniques, such as the double-counting method, the adversary method, and the divide-and-conquer method.
	\end{itemize}
	
	\subsection{Related work}
	Query complexity is a very successful measurement to study the relative power of quantum and classical computing. For quantum computing, query models can be divided into bounded-error and exact versions in terms of their outputs. A bounded-error model requires that the algorithm gives the correct result with some sufficiently high probability, while an exact model means that the algorithm gives the correct result with certainty. For the bounded-error case, there is much work showing the advantage of quantum algorithms over classical ones in terms of query complexity, e.g.,  \cite{Aaronson2018problem,aaronson2016separations,bun2018the}. However, the results for exact query algorithms seem more limited. For total Boolean functions, Beals et al. \cite{beals2001quantum} showed that quantum query algorithms only achieve polynomial speed-up over classical counterparts. In 2013, Ambainis \cite{ambainis2016superlinear} presented the first example that exact quantum computing has a superlinear advantage over deterministic algorithms. This result has been further improved by \cite{Ambainis2017separations} with a quadratic gap between its quantum and deterministic query complexity. On the other hand, for computing partial functions, there exists a super-exponential separation as shown by the famous Deutsch-Jozsa algorithm \cite{deutsch1992rapid}: constant versus linear. 
	
	In 1994, Simon's problem was shown to be solvable on quantum computers with $O(n)$ queries in the bounded-error setting. The lower bound of query complexity was proved to be $\Omega(n)$ in \cite{koiran2005quantum} by applying the polynomial method \cite{beals2001quantum}. For exact quantum query algorithms, Brassard and H\o yer \cite{brassard1997exact} solved the problem with $O(n)$ queries. Compared with their algorithm, Mihara and Sung \cite{mihara2003deterministic} proposed a simpler exact quantum algorithm. Apart from quantum algorithms, Cai and Qiu \cite{cai2018optimal} designed a classical deterministic algorithm for solving Simon's problem with $O(\sqrt{2^n})$ queries and proved their algorithm is optimal in terms of query complexity.
	
	Moreover, Simon's problem is a special case of a well-studied class of problems, the so-called \textit{hidden subgroup problem}. That is, given a function $f: G\to X$, where $G$ is a finitely
  generated group and $X$ is a finite set, such that $f$ is bijective on $G/S$ for a subgroup $S \le G$, the goal is to find $S$. Jozsa \cite{jozsa1998quantum} provided a uniform description of several important quantum algorithms such as Deutsch-Jozsa \cite{deutsch1992rapid}, Simon \cite{simon1994power}, and Shor \cite{shor1994algorithms} algorithms in terms of the hidden subgroup problem. Indeed, this problem has received a lot of attention, and many quantum algorithms were proposed for its different variants, e.g., \cite{bacon2005from,childs2007quantum,ettinger2004the,goncalves2017an,grigni2001quantum,hallgren2003the,kempe2005the,kuperberg2005a}.
	
	In Simon's problem, it is assumed that $G = \mathbb{Z}_2^n$ and $|S| = 2$. There are two directions to generalize Simon's problem. Alagic et al.\ \cite{alagic2007quantum} generalized Simon's problem by assuming $G = K^n$, where $K$ is a non-Abelian group of constant size, and $S$ is either trivial or $|S| = 2$. Additionally, they proposed an efficient quantum algorithm with time complexity $2^{O(\sqrt{n\log n})}$ to solve this extended version of Simon's problem. Brassard and H\o yer \cite{brassard1997exact} proposed another extended version of Simon's problem by assuming $G = \mathbb{Z}_2^n$ and $S \le G$, and solved this problem with $O(n)$ query complexity. Also, a similar result has been obtained in \cite{mihara2003deterministic}. Moreover, Brassard and H\o yer \cite{brassard1997exact} generalized their results to any finite general additive group. 
	
	In this paper, we consider the generalized Simon's problem similar to the problem in \cite{brassard1997exact}. It is assumed that $G = \mathbb{Z}_p^n$ and $rank(S) = k$, and the problem is denoted by $\mathsf{GSP}(p,n,k)$ as mentioned before. It is not difficult to see that $\mathsf{GSP}(p,n,k)$ can be solved exactly by adjusting the algorithms of \cite{brassard1997exact} slightly (see \ref{exact quantum algorithm}), and be solved with bounded error by a generalized Simon's algorithm \cite{Hirvensalo2001quantum}. All these algorithms need $O(n-k)$ queries. Meanwhile, by generalizing the method of \cite{koiran2005quantum}, Wu et al. \cite{wu2019quantum} found that the lower bound of quantum complexity of $\mathsf{GSP}(p,n,k)$ is $\Omega(n-k)$ with $p=2$. It is easy to obtain the same result for general $p$.
	
	To clarify the computational power gap between quantum and classical computing on this problem, there ought to be a characterization of the classical query complexity of it. However, to our knowledge, almost no related result has been obtained, except for the work \cite{wu2019quantum}. Compared with this work, our results are more general, since we consider the case $G = \mathbb{Z}_p^n$ rather than $G = \mathbb{Z}_2^n$. Meanwhile, in terms of classical query complexity, completely different techniques are used to obtain lower bounds, and we construct a smarter algorithm to obtain a better upper bound.
	Furthermore, if their lower bound proof was correct, then we could generalize their method to obtain the tight bound of the classical query complexity of $\mathsf{GSP}(p,n,k)$. However, it is unfortunate that we could not verify its correctness. 
	
	\subsection{Organization}
	The remainder of the paper is organized as follows. In Section \ref{sec:Preliminaries}, we review some notations concerning group theory used in this paper. In Section \ref{sec:lowerbound}, we present the lower bounds of deterministic, randomized and nonadaptive deterministic query complexity of $\mathsf{GSP}(p,n,k)$. In Section \ref{sec:upperbound}, an upper bound is obtained by giving a deterministic algorithm. Finally, a conclusion is made in Section \ref{sec:conclusion}. For completeness, an exact quantum query algorithm for $\mathsf{GSP}(p,n,k)$ is given in \ref{exact quantum algorithm}.
	
	\section{Preliminaries}
	\label{sec:Preliminaries}
	In this section, we present some notations used in this paper. Let $\mathbb{Z}_p$ denote the additive group of elements $\{0,1,...,p-1\}$ with addition modulo $p$, and $\mathbb{Z}_p^*$ denote the multiplication group of elements $\{1,2,...,p-1\}$ with multiplication modulo $p$. In the following, all the groups we mention are $\mathbb{Z}_p^n$ or its subgroup without special instructions, where $p$ is a prime. Let $x,y \in \mathbb{Z}_p^n$ with $x = (x_1,x_2,...,x_n)$ and $y = (y_1,y_2,...,y_n)$. For $x,y \in \mathbb{Z}_p^n$ with $x = (x_1,x_2,...,x_n)$ and $y = (y_1,y_2,...,y_n)$, we define
	$$
	x + y: = ((x_1 + y_1)\bmod p, (x_2 + y_2)\bmod p,...,(x_n + y_n)\bmod p),
	$$
	$$
	x - y:= x+(-y).
	$$
	For $X,Y \subseteq \mathbb{Z}_p^n,w\in\mathbb{Z}_p^n$, we define
	$$
	X \setminus Y := \{x\;|\; x\in X \land x\notin Y\},
	$$
	$$
	X + Y := \{x + y\;|\; x\in X, y\in Y\},
	$$
	$$
	X+w := \{x + w\;|\; x\in X\},
	$$
	$$
	X-w := X+(-w).
	$$
	We also use the abbreviated notation $X_{w}$ for $X+w$ and $X\cup w$ for $X \cup \{w\}$.
	Let $\alpha x$ denote $x+x+\cdots+x$ (the number of $x$ is $\alpha$).
	By $\langle X \rangle$, we denote the subgroup generated by $X$, i.e.,
	$$
	\langle X \rangle := \left\{ \sum_{i=1}^{k}\alpha_{i}x_i|x_i \in X,\alpha_{i} \in\mathbb{Z}_p \right\}.
	$$
	The set $X$ is called a generating set of $\langle X \rangle$. A set $X$ is \emph{linearly independent} if $\langle X \rangle \neq \langle Y \rangle$ for any proper subset $Y$ of $X$. In other words, a set $X$ is linearly independent if $X$ is the smallest generating set of $\langle X \rangle$. Notice that the cardinality $|\langle X \rangle |$ is $p^{|X|}$ if $X$ is linearly independent. 
	
	For any group $G$, the basis of $G$ is a maximal linearly independent subset of $G$.
	The cardinality of the basis of $G$ is called its \emph{rank}, denoted by $rank(G)$. If $H$ is a \emph{subgroup} of $G$, then we write $H \le G$; if $H$ is a \emph{proper subgroup}, then $H < G$. 
	\begin{Definition}[Complement Subgroup]
		\label{def:complement_group}
		For a group $G$ and its subgroup $H$, a group $\overline{H}_G$ is called a complement subgroup of $H$ in $G$ if
		$
		H + \overline{H}_G = G \; \text{and} \; H \cap \overline{H}_G = \{0^n\}.
		$
	\end{Definition}
	
	In this paper, we abbreviate $\overline{H}_G$ as $\overline{H}$ when $G = \mathbb{Z}_p^n$.

	To obtain our results, we need the following facts to characterize the query complexity of $\mathsf{GSP}$. 
	\begin{Fact}
		\label{Fact:exgroup}
		Let $V,W$ be two subgroups of $\mathbb{Z}_p^n$ such that $V \cap W = \{0^n\}$, $X$ be a basis of $V$ and $Y$ be a basis of $W$. Then $rank(V+W) = rank(V)+rank(W)$ and $X \cup Y$ is a basis of $V + W$.
	\end{Fact}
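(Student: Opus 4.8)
The plan is to exploit that, since $p$ is prime, $\mathbb{Z}_p$ is a field and $\mathbb{Z}_p^n$ is a vector space over it; under this identification the subgroups $V$ and $W$ are subspaces, $rank$ is dimension, and a (group-theoretic) basis is precisely a linear-algebra basis. The statement then becomes the familiar assertion that a sum of subspaces meeting only in $\{0^n\}$ is a direct sum whose dimension is additive. I would prove the two claims separately — that $X \cup Y$ generates $V+W$ and that it is linearly independent — and then read off the rank formula from the cardinality of the resulting basis.

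First I would show $\langle X \cup Y \rangle = V + W$. Every element of $V + W$ has the form $v + w$ with $v \in V$ and $w \in W$; writing $v$ as a $\mathbb{Z}_p$-combination of $X$ (possible since $X$ generates $V$) and $w$ as a combination of $Y$, we see $v + w \in \langle X \cup Y\rangle$, so $V + W \subseteq \langle X \cup Y\rangle$. The reverse inclusion is immediate because $X \subseteq V$ and $Y \subseteq W$, whence $\langle X \cup Y\rangle \subseteq V + W$. Hence $X \cup Y$ is a generating set of $V + W$.

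The crux is showing $X \cup Y$ is linearly independent, and this is where the hypothesis $V \cap W = \{0^n\}$ enters. Suppose $\sum_i \alpha_i x_i + \sum_j \beta_j y_j = 0^n$ with $x_i \in X$, $y_j \in Y$, and coefficients in $\mathbb{Z}_p$. Rearranging gives $\sum_i \alpha_i x_i = -\sum_j \beta_j y_j$; the left-hand side lies in $V$ and the right-hand side in $W$, so both equal an element of $V \cap W = \{0^n\}$. Since $X$ is a basis of $V$, the relation $\sum_i \alpha_i x_i = 0^n$ forces all $\alpha_i = 0$, and likewise all $\beta_j = 0$. Thus no nontrivial relation exists, so $X \cup Y$ is linearly independent; combined with the previous paragraph it is a basis of $V + W$. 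I would also note that $X \cap Y = \emptyset$, since a common element would lie in $V \cap W = \{0^n\}$ while a basis contains no zero vector, so $|X \cup Y| = |X| + |Y|$. Therefore $rank(V + W) = |X \cup Y| = |X| + |Y| = rank(V) + rank(W)$.

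I do not anticipate a genuine obstacle, as the result is standard once the vector-space viewpoint is adopted; the only point requiring care is reconciling the paper's combinatorial definition of linear independence (a set being the smallest generating set of the subgroup it generates) with the coefficient-relation argument above. Concretely, if some $\alpha_i \neq 0$ in a vanishing relation, then $x_i \in \langle X \setminus \{x_i\}\rangle$, which would make $\langle X \setminus \{x_i\}\rangle = \langle X\rangle$ for the proper subset $X \setminus \{x_i\}$, contradicting that $X$ is the smallest generating set of $V$. This equivalence is exactly what licenses the relation-based reasoning throughout, and verifying it is the one step where I would take care to match the paper's conventions.
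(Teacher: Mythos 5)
The paper states Fact~\ref{Fact:exgroup} without proof, treating it as standard linear algebra over the field $\mathbb{Z}_p$, and your argument is exactly the standard one that justifies it: generation is immediate, independence follows from $V \cap W = \{0^n\}$, and $X \cap Y = \emptyset$ gives the rank formula. Your proof is correct, and the extra care you take in reconciling the paper's ``smallest generating set'' definition of linear independence with the coefficient-relation characterization (which the paper only states in one direction, as Fact~\ref{Fact:linear}) is exactly the right point to be careful about.
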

	
	\begin{Fact}
		\label{Fact:linear}
		Let $\{x_i\} \subseteq \mathbb{Z}_p^n$. If $\sum_{i}\alpha_{i}x_i = 0$ implies $\alpha_i = 0$ for any $i$, then $\{x_i\}$ is linearly independent. 
	\end{Fact}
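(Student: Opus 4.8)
The plan is to prove the statement by contraposition, working directly from the combinatorial definition of linear independence given above (a set $X$ is linearly independent exactly when no proper subset generates $\langle X\rangle$). So I would assume that $\{x_i\}$ is \emph{not} linearly independent and exhibit a nontrivial vanishing combination, which directly contradicts the hypothesis.

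First, by negating the definition, there is a proper subset $Y \subsetneq \{x_i\}$ with $\langle Y\rangle = \langle \{x_i\}\rangle$. Since $Y$ is proper, I may fix some index $j$ with $x_j \in \{x_i\}\setminus Y$. The key step is to feed $x_j$ into this equality of generated subgroups: because $x_j \in \langle\{x_i\}\rangle = \langle Y\rangle$, the definition of the generated subgroup supplies coefficients $\beta_i \in \mathbb{Z}_p$ (indexed by the elements of $Y$) with $x_j = \sum_{x_i \in Y}\beta_i x_i$. Rearranging gives $(-1)\,x_j + \sum_{x_i\in Y}\beta_i x_i = 0$, a $\mathbb{Z}_p$-linear combination of the $x_i$ that vanishes while the coefficient of $x_j$ equals $-1 \equiv p-1 \pmod p$, which is nonzero since $p\ge 2$. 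This is precisely a relation $\sum_i\alpha_i x_i = 0$ with some $\alpha_i\neq 0$, contradicting the hypothesis; the contrapositive is therefore established.

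I do not expect a genuine obstacle here, as the content is a short translation between the two equivalent notions of independence available over the field $\mathbb{Z}_p$. The only points that require any care are invoking the correct (combinatorial) definition of \emph{linearly independent} when forming the negation, and the elementary observation that $-1$ is a nonzero element of $\mathbb{Z}_p$ (using that $p$ is prime, hence $p\ge 2$), which is what guarantees the combination I produce is genuinely nontrivial. The same argument can be read off without contraposition by fixing an arbitrary proper subset $Y$ missing some $x_j$ and showing $x_j\notin\langle Y\rangle$ (else the same rearrangement yields a forbidden relation), whence $\langle Y\rangle\neq\langle\{x_i\}\rangle$; but the contrapositive phrasing seems cleanest.
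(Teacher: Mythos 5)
Your proof is correct: the contrapositive argument, extracting $x_j\notin Y$ from a proper generating subset $Y$ and rearranging $x_j=\sum_{x_i\in Y}\beta_i x_i$ into a vanishing combination with coefficient $-1\neq 0$ on $x_j$, is exactly the standard translation between the paper's combinatorial definition of independence and the coefficient-based one. The paper states this Fact without proof, so there is nothing to compare against; your argument fills that gap correctly and needs no changes.
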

	
	\begin{Fact}
		\label{Fact:join}
		Suppose $V,H$ {are} two subgroups of $\mathbb{Z}_p^n$ such that $V \cap H = \{0^n\}$. For $w \notin V$, we have $\langle V \cup w \rangle \cap H = \{0^n\}\Leftrightarrow V_w \cap H = \emptyset $.
	\end{Fact}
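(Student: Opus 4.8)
The plan is to exploit the fact that, since $p$ is prime, $\mathbb{Z}_p^n$ is a vector space over the field $\mathbb{Z}_p$, so every subgroup is automatically closed under scalar multiplication. Under this viewpoint the hypothesis $w \notin V$ says that $w$ is linearly independent from $V$, and hence $\langle V \cup w \rangle$ decomposes as the disjoint union of the $p$ cosets $V + \alpha w$ for $\alpha \in \{0,1,\ldots,p-1\}$; that is, every element of $\langle V \cup w \rangle$ can be written as $v + \alpha w$ with $v \in V$ and $\alpha \in \mathbb{Z}_p$ (the cosets are distinct because $(\alpha-\beta)w \in V$ with $\alpha \neq \beta$ would force $w \in V$). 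This coset decomposition is the structural observation that drives both directions.

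For the forward implication I would argue by contraposition. If $V_w \cap H \neq \emptyset$, pick $v \in V$ with $v + w \in H$. Since $v + w$ already lies in $\langle V \cup w \rangle$, and since $w \notin V$ forces $v + w \neq 0^n$, this exhibits a nonzero element of $\langle V \cup w \rangle \cap H$, so the left-hand side cannot equal $\{0^n\}$.

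For the converse I would assume $V_w \cap H = \emptyset$ and take an arbitrary $h \in \langle V \cup w \rangle \cap H$, writing $h = v + \alpha w$ with $v \in V$ and $\alpha \in \{0,\ldots,p-1\}$. If $\alpha = 0$ then $h \in V \cap H = \{0^n\}$ by hypothesis. If $\alpha \neq 0$, here is where primality enters: $\alpha$ is invertible in $\mathbb{Z}_p$, and since $H$ is a subgroup (hence a subspace) we have $\alpha^{-1} h \in H$; but $\alpha^{-1} h = (\alpha^{-1} v) + w \in V_w$, contradicting $V_w \cap H = \emptyset$. Thus $\alpha = 0$ is forced and $h = 0^n$, giving $\langle V \cup w \rangle \cap H \subseteq \{0^n\}$, while the reverse inclusion is immediate.

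The only genuinely delicate point is this scaling step in the converse: it relies essentially on $p$ being prime, so that every nonzero residue $\alpha$ is invertible modulo $p$, and on reading the subgroup $H$ of $\mathbb{Z}_p^n$ as a $\mathbb{Z}_p$-subspace so that it is closed under multiplication by $\alpha^{-1}$. Everything else is routine coset bookkeeping built on the decomposition above.
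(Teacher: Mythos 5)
Your proof is correct and takes essentially the same route as the paper's: both rest on writing an element of $\langle V \cup w \rangle$ as $v + \alpha w$ and, in the crucial step, scaling a nonzero $h = v + \alpha w$ in the intersection by $\alpha^{-1} \in \mathbb{Z}_p^*$ to produce an element of $V_w \cap H$. The only cosmetic difference is which implication is handled by contraposition (the paper does the forward direction directly and the converse by contraposition; you do the reverse), which does not change the mathematical content.
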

	
	\begin{proof}
		Suppose $\langle V \cup w \rangle \cap H = \{0^n\}$. Since $V_w \subseteq \langle V \cup w \rangle$, we have $V_w \cap H \subseteq \{0^n\}$. Because $0^n \notin V_w$, we have $V_w \cap H = \emptyset$. Now suppose $\langle V \cup w \rangle \cap H \neq \{0^n\}$. Then there exists a non-zero element $h \in \langle V \cup w \rangle \cap H$. 
		We assume that $h = \alpha w+v$, where {$\alpha \in \mathbb{Z}_p$} and $v \in V$. {If $\alpha = 0$, then $h \in V \cap H$. Since $V \cap H = \{0^n\}$, we have $h = 0^n$, which leads to a contradiction. Thus, $\alpha \neq 0$, i.e., $\alpha \in \mathbb{Z}_p^*$.} Since $h \in H$, we have {$\alpha^{-1}h \in H$}. Because ${\alpha^{-1}h} = w+\alpha^{-1}v \in V_w$, we have $\alpha^{-1}h \in V_w \cap H$. Thus, $\langle V \cup w\rangle \cap H \neq \{0^n\}$ implies $V_w \cap H \neq \emptyset$.
	\end{proof}
	
\section{Lower bound}
	\label{sec:lowerbound}
	In this section, we first present a lower bound by proving Theorem \ref{Th:lower}.
	\medskip
	
	\noindent
	\textbf{Theorem \ref{Th:lower}.}
	\textit{Any classical (deterministic or randomized) algorithm solving $\mathsf{GSP}(p,n,k)$ needs to make $\Omega\left(\max\{k, \sqrt{p^{n-k}}\}\right)$ queries.}
	\medskip
	
	
	
	Without loss of generality, we only need to prove the randomized query complexity, which consists of Lemma \ref{Lemma:lower1_ran} and \ref{Lemma:lower2_ran}. Before proving Lemma \ref{Lemma:lower1_ran}, we first give the results below (Claim \ref{claim:numsubgroup1} and \ref{claim:numsubgroup2}). Claim \ref{claim:numsubgroup1} is a folklore conclusion that can be found in the textbooks such as \cite{Stanley2000Enumerative}, and we give the proof for completeness. 
	\begin{Claim}
		\label{claim:numsubgroup1}
		Let $T_1$ denote the number of distinct subgroups of rank $k$ in $\mathbb{Z}_p^n$. Then $T_1 = \frac{(p^n-p^0)(p^n-p^1)\cdots(p^n-p^{k-1})}{(p^k-p^0)(p^k-p^1)\cdots(p^k-p^{k-1})}$. 
	\end{Claim}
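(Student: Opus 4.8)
The plan is a double-counting argument on \emph{ordered bases}. First I would count the number $N$ of ordered linearly independent $k$-tuples $(v_1,\dots,v_k)$ in $\mathbb{Z}_p^n$. Building such a tuple greedily: once $v_1,\dots,v_{i-1}$ have been chosen linearly independent, the subgroup $\langle v_1,\dots,v_{i-1}\rangle$ has exactly $p^{i-1}$ elements (by the remark that $|\langle X\rangle| = p^{|X|}$ for linearly independent $X$), and adjoining a new vector $v_i$ keeps the tuple linearly independent precisely when $v_i$ lies outside this subgroup (Fact \ref{Fact:linear}). Hence there are $p^n - p^{i-1}$ admissible choices for $v_i$, and
$$
N = \prod_{i=0}^{k-1}(p^n - p^i) = (p^n-1)(p^n-p)\cdots(p^n-p^{k-1}).
$$

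Next I would group these tuples by the subgroup they generate. Every ordered linearly independent $k$-tuple generates a subgroup of rank $k$, and conversely each rank-$k$ subgroup $S$ is generated by exactly the ordered bases of $S$ itself. Since $S$ has rank $k$ its basis has $k$ elements and $|S| = p^k$, so running the identical greedy count \emph{inside} $S$ (with $p^k$ in place of $p^n$) shows that $S$ possesses exactly
$$
M = (p^k-1)(p^k-p)\cdots(p^k-p^{k-1})
$$
ordered bases. Partitioning the $N$ tuples into $T_1$ blocks, one per rank-$k$ subgroup, each of size $M$, gives $N = T_1 \cdot M$, whence
$$
T_1 = \frac{N}{M} = \frac{(p^n-1)(p^n-p)\cdots(p^n-p^{k-1})}{(p^k-1)(p^k-p)\cdots(p^k-p^{k-1})}.
$$

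The only point requiring care — the ``obstacle'' such as it is — is justifying that the number $M$ of ordered bases is the same for \emph{every} rank-$k$ subgroup, so that the blocks are genuinely equinumerous. This holds because any two rank-$k$ subgroups are isomorphic as $\mathbb{Z}_p$-modules (each being $\cong \mathbb{Z}_p^k$, since each admits a linearly independent generating set of size $k$ by definition of rank), and any such isomorphism carries ordered bases bijectively to ordered bases; equivalently, the greedy count inside $S$ depends only on $|S| = p^k$ and $k$, not on the particular subgroup. With equinumerous blocks established, the division is valid and the formula follows.
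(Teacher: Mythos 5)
Your proposal is correct and follows essentially the same route as the paper: count ordered linearly independent $k$-tuples in $\mathbb{Z}_p^n$ greedily to get $(p^n-1)(p^n-p)\cdots(p^n-p^{k-1})$, then divide by the $(p^k-1)(p^k-p)\cdots(p^k-p^{k-1})$ ordered bases of each rank-$k$ subgroup. Your explicit justification that every rank-$k$ subgroup has the same number of ordered bases is a point the paper leaves implicit, but the argument is identical in substance.
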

	
	\begin{proof}
		We can specify a subgroup of rank $k$ by giving $k$ linearly independent elements in it. Now we select $k$ linearly independent elements from $\mathbb{Z}_p^n$ sequentially. Once we have selected $d$ elements, we {cannot} select the elements from the subgroup generated by these $d$ elements in the next step. Thus, we have $p^n-p^d$ possible ways to pick {the} $(d+1)$-th element. So the total number of possible ways to select $k$ linearly independent elements is {$(p^n-p^0)(p^n-p^1)\cdots(p^n-p^{k-1})$}.
		
		However, it is a double-counting process above. For a certain subgroup, there are {$(p^k-p^0)(p^k-p^1)\cdots(p^k-p^{k-1})$} ways to pick the elements generating it. Thus, the total number of subgroups is
		$$\frac{(p^n-p^0)(p^n-p^1)\cdots(p^n-p^{k-1})}{(p^k-p^0)(p^k-p^1)\cdots(p^k-p^{k-1})}.$$
	\end{proof}
	
	\begin{Claim}
		\label{claim:numsubgroup2}
		Every non-zero element in $\mathbb{Z}_p^n$ belongs to $T_2 = \frac{(p^n-p^1)(p^n-p^2)\cdots(p^n-p^{k-1})}{(p^k-p^1)(p^k-p^2)\cdots(p^k-p^{k-1})}$ distinct subgroups of rank $k$. 
	\end{Claim}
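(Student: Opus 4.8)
The plan is to imitate the double-counting argument of Claim \ref{claim:numsubgroup1}, but now with the non-zero element $v$ pinned down as a compulsory basis element. First I would note that since $v \neq 0^n$, the singleton $\{v\}$ is linearly independent, so every rank-$k$ subgroup containing $v$ can be written as $\langle v, x_2, \dots, x_k\rangle$ for some completion of $v$ to a linearly independent set of size $k$. Thus I would count ordered tuples $(v, x_2, \dots, x_k)$ in which $v$ is fixed and $x_2, \dots, x_k$ are chosen so that the whole tuple is linearly independent, and then correct for the overcounting exactly as in Claim \ref{claim:numsubgroup1}.

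For the numerator, I would pick $x_2, \dots, x_k$ sequentially: once $d$ elements $v, x_2, \dots, x_d$ have been chosen they span a subgroup of size $p^{d}$, so $x_{d+1}$ may be any of the $p^n - p^{d}$ elements outside it. Starting from the already-fixed $v$, this gives $(p^n - p^1)(p^n - p^2)\cdots(p^n - p^{k-1})$ ordered extensions. For the per-subgroup multiplicity, I would fix a rank-$k$ subgroup $S$ of size $p^k$ containing $v$ and count the ordered bases of $S$ that begin with $v$: here $x_2$ ranges over the $p^k - p^1$ elements of $S$ outside $\langle v\rangle$, $x_3$ over the $p^k - p^2$ elements outside $\langle v, x_2\rangle$, and so on, giving $(p^k - p^1)(p^k - p^2)\cdots(p^k - p^{k-1})$. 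Dividing the first count by the second yields exactly $T_2$.

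The step needing the most care is verifying that the denominator is genuinely independent of the chosen $v$, i.e.\ that every non-zero $v \in S$ extends to an ordered basis of $S$ in the same number $(p^k - p^1)\cdots(p^k - p^{k-1})$ of ways; this follows because the completion count depends only on $|S| = p^k$ and the sizes $p^2, \dots, p^{k}$ of the intermediate spans, not on the identity of $v$, so the division is legitimate for an arbitrary fixed non-zero element. I would invoke Fact \ref{Fact:linear} (or the remark that $|\langle X\rangle| = p^{|X|}$ for linearly independent $X$) to justify the span sizes $p^{d}$ used above.

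As an alternative cross-check, I would also mention the purely combinatorial route via Claim \ref{claim:numsubgroup1}: counting incidence pairs $(v, S)$ with $v \neq 0^n$ and $v \in S$ in two ways gives $(p^n - 1)\,T_2 = T_1\,(p^k - 1)$, since each of the $T_1$ rank-$k$ subgroups has $p^k - 1$ non-zero elements and, by the transitivity of the action of $GL_n(\mathbb{Z}_p)$ on non-zero vectors, every non-zero element lies in the same number $T_2$ of them; solving for $T_2$ and simplifying with the formula for $T_1$ reproduces the claimed expression. I expect the direct counting above to be the cleaner presentation, with the transitivity/incidence argument serving mainly as confirmation.
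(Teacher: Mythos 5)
Your proposal is correct and follows essentially the same double-counting argument as the paper: fix the non-zero element as the first basis vector, count the $(p^n-p^1)\cdots(p^n-p^{k-1})$ ordered completions, and divide by the $(p^k-p^1)\cdots(p^k-p^{k-1})$ ordered completions within a fixed rank-$k$ subgroup. The incidence-pair cross-check is a nice confirmation but not needed; the core argument matches the paper's proof.
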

	
	\begin{proof}
		Now we count the number of subgroups of rank $k$ containing a certain non-zero element $e$. Similar to the proof of Claim \ref{claim:numsubgroup1}, we specify a subgroup of rank $k$ by giving $k$ linearly independent elements in it. We also select $k$ linearly independent elements from $\mathbb{Z}_p^n$ sequentially. Differently, we fix the first element as $e$. So the total number of possible ways to select the following $k-1$ elements is $(p^n-p^1)(p^n-p^2)\cdots(p^n-p^{k-1})$.
		
		It is also a double-counting process above. For a certain subgroup containing element $e$, there are $(p^k-p^1)(p^k-p^2)\cdots(p^k-p^{k-1})$ ways to pick the remaining $k-1$ elements to generate it. Thus, the total number of subgroups containing element $e$ is
		$$\frac{(p^n-p^1)(p^n-p^2)\cdots(p^n-p^{k-1})}{(p^k-p^1)(p^k-p^2)\cdots(p^k-p^{k-1})}.
		$$
	\end{proof}

	\begin{Lemma}
		\label{Lemma:lower1_ran}
Any classical randomized algorithm solving $\mathsf{GSP}(p,n,k)$ needs $\Omega\left(\sqrt{p^{n-k}}\right)$ queries.
	\end{Lemma}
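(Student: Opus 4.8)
The plan is to invoke Yao's minimax principle and reduce the randomized lower bound to a distributional bound against deterministic algorithms. I would fix the hard distribution $\mu$ that first samples the hidden subgroup $S$ uniformly among all rank-$k$ subgroups of $\mathbb{Z}_p^n$ (there are $T_1$ of them, by Claim \ref{claim:numsubgroup1}) and then picks a uniformly random labeling of the cosets, so that $f$ is a random function consistent with $S$. Since the codomain $X$ is an unstructured finite set, the only information an algorithm extracts from its queries is the collision pattern, i.e.\ which queried pairs $q_i, q_j$ satisfy $q_i - q_j \in S$. By Yao's principle it then suffices to show that every deterministic algorithm making at most $T < c\sqrt{p^{n-k}}$ queries errs with probability exceeding $1/3$ under $\mu$.

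Next, for a fixed deterministic algorithm $A$, I would run it against the collision-free oracle that answers every query with a fresh symbol; because each answer is fresh, the sequence of queries is forced, producing a fixed query set $Q$ with $|Q| \le T$, hence $|T_Q| \le T^2$ (recall $T_Q = \{q_1 - q_2 : q_1, q_2 \in Q\}$ and $0^n \in T_Q$). The key structural observation is that for every rank-$k$ subgroup $S$ with $S \cap (T_Q \setminus \{0^n\}) = \emptyset$, the function consistent with $S$ is injective on $Q$, so all answers are distinct; after the random relabeling these answers are indistinguishable from fresh symbols, so $A$ follows exactly the collision-free path and its output $\hat S$ depends only on the labeling, independently of which such ``avoiding'' $S$ was chosen. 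Consequently, conditioned on the avoidance event, $\hat S$ is independent of the true $S$, which is uniform over a large set $\mathcal{A}$ of avoiding subgroups, so $A$ is correct on at most a $1/|\mathcal{A}|$ fraction of them.

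The counting step then uses Claim \ref{claim:numsubgroup2}: each nonzero element of $\mathbb{Z}_p^n$ lies in exactly $T_2$ rank-$k$ subgroups, so the number of rank-$k$ subgroups meeting $T_Q \setminus \{0^n\}$ is at most $(|T_Q| - 1)\,T_2 \le T^2 T_2$. Using the identity $T_1 / T_2 = (p^n - 1)/(p^k - 1) = \Theta(p^{n-k})$, which falls out directly from the formulas in Claims \ref{claim:numsubgroup1} and \ref{claim:numsubgroup2}, the $\mu$-probability that the sampled $S$ fails to avoid $T_Q$ is at most $T^2 T_2 / T_1 = T^2 (p^k-1)/(p^n-1)$. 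Combining, the success probability of $A$ under $\mu$ is at most $T^2(p^k-1)/(p^n-1) + 1/|\mathcal{A}|$, which drops below $2/3$ once $T \le \tfrac{1}{2}\sqrt{(p^n-1)/(p^k-1)}$. Since $(p^n-1)/(p^k-1) \ge p^{n-k}$, this yields the desired bound $T = \Omega(\sqrt{p^{n-k}})$.

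I expect the main obstacle to lie in the second step: rigorously justifying that an \emph{adaptive} deterministic algorithm gains nothing beyond the collision pattern and therefore behaves identically on all collision-free inputs. This is exactly where the random relabeling in $\mu$ is needed, so that conditioned on the avoidance event the transcript, and hence the output $\hat S$, is independent of the true hidden subgroup; once this invariance is established, the counting via Claims \ref{claim:numsubgroup1} and \ref{claim:numsubgroup2} is routine.
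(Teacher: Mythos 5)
Your route is genuinely different from the paper's. The paper does not use Yao's principle at all: it reduces $\mathsf{GSP}(p,n,k)$ to the problem of computing the hidden subgroup from the \emph{characteristic function} $f_S(x)=\mathbb{1}[x\in S]$, observing that one bit of $f_S$ costs a pair of queries to $f$ (so $Q(f)^2\ge Q(\mathscr{P})$), and then applies the relational adversary method with uniform weights to $\mathscr{P}$; the quantities $\theta(f_{S_1},x)$ evaluate to $\frac{T_1-1}{T_2-1}$ and $\frac{T_1-1}{T_1-T_2-1}$, giving $v_{\min}>p^{n-k}-1$ and hence the $\Omega(\sqrt{p^{n-k}})$ bound. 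Both arguments ultimately rest on the same counting, namely Claims \ref{claim:numsubgroup1} and \ref{claim:numsubgroup2} and the ratio $T_1/T_2=(p^n-1)/(p^k-1)\ge p^{n-k}$. What the paper's reduction buys is that adaptivity is absorbed into the black-box adversary theorem, so it never has to reason about what an adaptive algorithm ``learns''; what your distributional argument buys is a more elementary, self-contained proof (essentially the one used for the $p=2$, $k=1$ case in \cite{cai2018optimal,Wolf2019quantum}) that additionally yields average-case hardness under an explicit distribution.

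The one step you must repair is the assertion that the query set $Q$ is \emph{fixed}. A deterministic adaptive algorithm chooses $q_{t+1}$ as a function of the actual labels in $X$ it has received, so under the random relabeling $Q$ is a random variable, and it is correlated with $S$; you therefore cannot directly apply the union bound $\Pr[S\cap(T_Q\setminus\{0^n\})\neq\emptyset]\le|T_Q|\,T_2/T_1$ to a $T_Q$ that depends on $S$. You correctly flag this as the main obstacle, but the resolution is not merely ``the relabeling makes answers indistinguishable from fresh symbols''; the standard fix is sequential. Show by induction on $t$ that, conditioned on a collision-free transcript after $t$ queries, the posterior on $S$ is uniform over the rank-$k$ subgroups avoiding the current difference set (because every such $S$ admits the same number of labelings consistent with the $t$ distinct observed values), so the probability that query $t+1$ creates a collision is at most $t\cdot T_2/(T_1-t^2T_2)$; summing over $t\le T$ bounds the total collision probability by roughly $T^2T_2/(T_1-T^2T_2)$, and conditioned on no collision the output is correct with probability at most $1/(T_1-T^2T_2)$. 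With that martingale-style bookkeeping your constants go through and the conclusion $T=\Omega(\sqrt{p^{n-k}})$ is correct.
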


	\begin{proof}\footnote{We are very grateful to the editors and anonymous reviewers for pointing out a flaw in the original proof  and providing the idea of the current proof. Actually, it is a generalization of the proof of the classical lower bound of Simon's problem, see \cite{simon1994power,Kitaev2002Classical,Bacon2006,Wolf2019quantum}.}	
Suppose that some randomized algorithm using at most $t$ queries output the correct answer with probability 2/3, for every possible $f$. Then it will also output the correct answer with probability 2/3 if $f$ is chosen by a randomized procedure that depends on the queries. So here
is the procedure to choose $f$. First, choose the subgroup $S$ uniformly randomly
among all subgroups of rank $k$. Then answer the queries as follows. Suppose that $x_1,...,x_{j-1}$ have been already queried, with answers $y_1,...,y_{j-1}$. If $x_i - x_{j} \in S$ for some $i < j$, then let $y_{j} = y_i$; otherwise let $y_j$ be uniformly
distributed in $X \setminus \{y_1,...,y_{j-1}\}$. In the following, it suffices to prove any randomized algorithm using at most $c\cdot\sqrt{p^{n-k}}$ queries cannot give the correct answer with probability $2/3$ if $f$ is chosen by the above procedure for some constant $c$.

Now consider an arbitrary sequence of queries $x_1,...,x_{j}$ ($j \le \sqrt{p^{n-k}/2}$), which we can assume to be all distinct. We say it is \textit{good} if it shows a collision (i.e. $y_i = y_{i'}$ for some $i \neq i'$); otherwise, it is \textit{bad}. 
Let $\mathcal{H}$ denote the set of the subgroups of rank $k$. Then $|\mathcal{H}| = T_1$ as Claim  \ref{claim:numsubgroup1}. Let $\mathcal{H}_j$ denote the set of the subgroups of rank $k$ that do not contain $x_i-x_l$ for any $i<l \le j$. 
Since $x_i-x_l$ belongs to $T_2$ subgroups of rank $k$ by Claim \ref{claim:numsubgroup2} for any $i < l \le j$, we have 
$|\mathcal{H}_j| \ge T_1 - \binom{j}{2}T_2$. 

Next, we will prove if $x_1,...,x_{j}$ is bad, there are still $|\mathcal{H}_{j}|$ equally possible values for $S$. For any $H \notin \mathcal{H}_{j}$, $Pr\{x_1,...,x_{j}\text{ is bad}|S = H\} = 0$. For any $H \in \mathcal{H}_{j}$, $Pr\{x_1,...,x_{j}\text{ is bad}|S = H\} = 1$. Thus, if $H \in \mathcal{H}_{j}$, then
$$
\begin{aligned}
&Pr\{S = H|x_1,...,x_{j}\text{ is bad}\} \\
&= \frac{Pr\{S=H\}Pr\{x_1,...,x_{j}\text{ is bad}|S=H\}}{\sum_{H' \in \mathcal{H}}Pr\{S=H'\}Pr\{x_1,...,x_{j}\text{ is bad}|S = H'\}} \\
&= \frac{Pr\{S=H\}}{\sum_{H' \in \mathcal{H}_{j}}Pr\{S = H'\}} \\ 
&=\frac{1}{|\mathcal{H}_{j}|}.
\end{aligned}
$$

For any $i < j$, since $x_i-x_j$ belong to $T_2$ subgroups of rank $k$, we have $Pr\{x_i-x_j \in S| x_1,...,x_{j-1} \text{ is bad}\} \le \frac{T_2}{|\mathcal{H}_{j-1}|}$. Thus, 
$$
Pr\{x_1,...,x_j \text{ is good }|x_1,...,x_{j-1} \text{ is bad}\} \le 
(j-1)\frac{T_2}{|\mathcal{H}_{j-1}|}.
$$
Therefore, for an arbitrary sequence of $t$ queries $x_1,...,x_t$ ($t \le \sqrt{p^{n-k}/2}$), the probability that $x_1,...,x_t$ is good is 
$$
\begin{aligned}
P_t &= \sum_{j=2}^t Pr\{x_1,...,x_{j-1} \text{ is bad}, x_1,...,x_j \text{ is good}\} \\
&\le \sum_{j=2}^t Pr\{x_1,...,x_j\text{ is good}| x_1,...,x_{j-1}\text{ is bad}\} \\
&\le \sum_{j=2}^t(j-1)\frac{T_2}{|\mathcal{H}_{j-1}|} \\
&< \binom{t}{2}\frac{T_2}{|\mathcal{H}_{t}|}.
\end{aligned}
$$
Since $t < \sqrt{p^{n-k}/2}$, we have
$$
\binom{t}{2}\frac{T_2}{T_1} < \frac{p^{n-k}}{4}\frac{p^k-1}{p^n-1} < \frac{1}{4}.
$$
Thus,
$$
|\mathcal{H}_t| \ge T_1-\binom{t}{2}T_2 = T_1\left(1-\binom{t}{2}\frac{T_2}{T_1}\right) > \frac{3}{4}T_1, 
$$
and 
$$
P_t < \frac{4}{3}\cdot\binom{t}{2}\frac{T_2}{T_1}  < \frac{1}{3}.
$$
If $x_1,...,x_t$ is bad, then algorithm must choose one subgroup out of the last $|\mathcal{H}_t|$ equally possible subgroups. When $n$ is enough large (i.e. $n \ge 3$), 
$$
T_1 > p^{(n-k)k} \ge p^{n-1} \ge 4,
$$
and thus $|\mathcal{H}_t|  > \frac{3}{4}T_1> 3$. Therefore, the success probability of an arbitrary algorithm using at most $\sqrt{p^{n-k}/2}$ queries is less than $1/3+1/3 = 2/3$.
	\end{proof}

	In the following proof, if $Q$ is the set of queried elements in an algorithm to solve $\mathsf{GSP}(p,n,k)$, let $T_Q = \{q_1- q_2\ |\ q_1,q_2 \in Q\}$. It is easy to see $|T_Q| \le |Q|^2$ and $0^n \in T_Q$.
	\begin{Lemma}
		\label{Lemma:lower2_ran}
		Any classical randomized algorithm solving $\mathsf{GSP}(p,n,k)$ needs $\Omega\left(k\right)$
		queries.
	\end{Lemma}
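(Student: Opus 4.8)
The plan is to prove the bound through Yao's minimax principle. I fix the uniform distribution $\mu$ over all rank-$k$ subgroups $S\le\mathbb{Z}_p^n$, taking each $f_S$ to be an abstract coset-labelling so that an algorithm learns nothing beyond the equality pattern of the points it queries, and I aim to show that \emph{every} deterministic algorithm making at most $k$ queries answers correctly with probability at most $1/2$ under $\mu$. Since a randomized algorithm is a distribution over deterministic ones, this forces any bounded-error randomized algorithm to use at least $k+1=\Omega(k)$ queries. The whole argument thus reduces to one indistinguishability claim: \textit{after any $t\le k$ queries, whatever values are returned, at least two distinct rank-$k$ subgroups are consistent with them.}

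To prove the claim, let $q_1,\dots,q_t$ be the queried points and let the returned values split them into $m$ equality classes. The only hard information obtained is that within-class differences lie in the target $S$ while across-class differences do not. Let $W$ be the subgroup generated by the within-class differences; summing $c_i-1$ over the classes gives $r:=rank(W)\le t-m$. I then pass to the quotient $\overline{V}:=\mathbb{Z}_p^n/W\cong\mathbb{Z}_p^{\,n-r}$, where the $m$ classes collapse to $m$ distinct points, and I let $\overline{D}$ be the subgroup generated by their pairwise differences, so $rank(\overline{D})\le m-1$. Every subgroup $S'$ consistent with the answers contains $W$ and corresponds to a subspace $\overline{S'}:=S'/W$ of dimension $k-r$ that keeps the $m$ points in distinct cosets, i.e.\ meets the nonzero pairwise differences (all lying in $\overline{D}$) only in $0$. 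The decisive point is that $t\le k$ forces $\dim\overline{S}=k-r\ge m>m-1\ge rank(\overline{D})$ for the true target $\overline{S}=S/W$, so the target subspace is strictly larger than the span of everything the algorithm has seen, and this slack is what I will exploit.

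Given the slack I construct a second consistent subgroup. Writing $P:=\overline{S}\cap\overline{D}$, and using $\dim P\le rank(\overline{D})<\dim\overline{S}$, I pick a hyperplane $\overline{S}_1$ of $\overline{S}$ containing $P$, which forces $\overline{S}_1\cap\overline{D}=P$. Because $(\overline{D}+\overline{S})/\overline{S}\le\overline{V}/\overline{S}$ has dimension at most $n-k$, one checks $\dim(\overline{D}+\overline{S}_1)\le n-r-1$, so both $\overline{D}+\overline{S}_1$ and $\overline{S}$ are proper subspaces of $\overline{V}$; since a vector space is never the union of two proper subspaces, there is some $x\notin(\overline{D}+\overline{S}_1)\cup\overline{S}$. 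Then $\overline{S}':=\overline{S}_1+\langle x\rangle$ is a $(k-r)$-dimensional subspace with $\overline{S}'\cap\overline{D}=P$, hence it separates the $m$ points exactly as $\overline{S}$ does, yet $\overline{S}'\ne\overline{S}$; lifting it back through the quotient yields a rank-$k$ subgroup $S'\ne S$ consistent with the same answers. With the claim in hand, conditioned on any answer-path the posterior of $\mu$ is uniform over the (at least two) consistent subgroups, so the algorithm's single guess is correct with probability at most $1/2$, finishing the proof. I expect the main obstacle to be exactly this claim: unlike the $\sqrt{p^{n-k}}$ bound, it cannot be obtained by counting the subgroups through forbidden differences (that route yields only $\Omega(\sqrt{p^{n-k}})$); the real content is the linear-algebraic fact that $\le k$ queries pin down a span of dimension $<k$ in the relevant quotient, too small to determine a rank-$k$ subgroup even when the non-collision information is used in full.
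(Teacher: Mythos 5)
Your proof is correct, but it reaches the $\Omega(k)$ bound by a different mechanism than the paper. The paper's proof of Lemma~\ref{Lemma:lower2_ran} is a counting argument: assuming at most $k/2$ queries, it notes that the observed collisions generate a subgroup $S_2\le S$ of rank $x\le k/2$, and then double-counts (in the style of Claims~\ref{claim:numsubgroup1} and~\ref{claim:numsubgroup2}) the rank-$k$ subgroups extending $S_2$ whose new generators successively avoid $\langle T_Q\rangle$, obtaining at least $p^{(n-k)k/2}$ consistent candidates and hence an exponentially small success probability. You instead stop at the qualitative statement that at least two consistent subgroups survive $k$ queries, and prove it constructively: quotient by the span $W$ of the within-class differences, observe that $\dim(S/W)=k-r\ge m$ strictly exceeds $rank(\overline{D})\le m-1$ for the span $\overline{D}$ of the surviving cross-class differences, and replace one hyperplane direction of $\overline{S}$ by a vector outside $(\overline{D}+\overline{S}_1)\cup\overline{S}$, using that a vector space over $\mathbb{Z}_p$ is never the union of two proper subspaces. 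Both arguments rest on the same observation --- $t$ queries yield at most $t-m<k$ independent elements of $S$, so $S$ is underdetermined --- but your route is tighter on the threshold ($k$ versus $k/2$ queries) and treats the non-collision constraints and the posterior more explicitly than the paper does, while the paper's count buys a quantitatively far smaller success probability ($p^{-(n-k)k/2}$ versus $1/2$), more than is needed for the $\Omega(k)$ statement. Two small points to tighten in yours: the ``abstract coset-labelling'' should be instantiated concretely (e.g.\ a uniformly random injective labelling of the cosets) so that the likelihood of a transcript depends only on the number $m$ of observed classes, which is what justifies the uniform posterior; and the chain $\dim\overline{S}\ge m> m-1\ge rank(\overline{D})$ presumes $m\ge 1$, i.e.\ that at least one query has been made (the case $t=0$ being trivial).
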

	\begin{proof}
	  Suppose an algorithm makes no more than $k/2$ queries. Denote the set of queried elements by $Q$. Let $S_1 = \langle T_Q \rangle, S_2= \langle T_Q \cap S \rangle, x = rank(S_2), y = rank(S_1)$. Since $S_2 \le S_1$, we have $x \le y$. Since $Q$ contains no more than $k/2$ queries, $T_Q$ contains at most $k/2$ linear independent elements, i.e, $y \le k/2$. Thus, $x \le y \le k/2$. 
	   Now we count the number of possible target subgroups until now. We have found a subgroup of rank $x$ of $S$. Thus, we only need to find other $k-x$ linear independent elements in $S$ in turn. In the first step, we try to find first element $e$. There are at least $p^{n}-p^y$ candidate elements, i.e., those elements not in $S_1$. In the next step, there are at least $p^{n}-p^{y+1}$ candidate elements, i.e., those elements not in $\langle S_1 \cup e \rangle$. In the end, we have $(p^{n}-p^y)(p^{n}-p^{y+1})\cdots(p^{n}-p^{y+(k-x+1)})$ ways to select these $k-x$ elements. 
	   However, it is a double-counting process above. For a certain subgroup of rank $k$, if we have known a subgroup of rank $x$ of it, there are $(p^{k}-p^x)(p^{k}-p^{x+1})\cdots(p^{k}-p^{k-1})$ ways to pick other $k-x$ linear independent elements. Thus, the total number of subgroups is no less than $$\frac{(p^{n}-p^y)(p^{n}-p^{y+1})\cdots(p^{n}-p^{y+(k-x+1)})}{(p^{k}-p^x)(p^{k}-p^{x+1})\cdots(p^{k}-p^{k-1})} > (\frac{p^{n}-p^{y+k-x-1}}{p^{k}-p^{k-1}})^{k-x} > p^{(n-k)k/2},$$
	  which means the success probability of the algorithm is less than $\frac{1}{p^{(n-k)k/2}}$. Thus, $k/2$ queries is not enough to find the target subgroup with a high probabilitiy.

	\end{proof}

	In the following, we give Claim \ref{Claim:numjoin1} first. Then we present a lower bound for the nonadaptive deterministic query complexity by proving Theorem \ref{Th:lower_non}.

	\begin{Claim}
		\label{Claim:numjoin1}
		For any set $D \subseteq \mathbb{Z}_p^n$ such that $0^n \notin D$ and $|D| < \frac{d(p^n-1)}{(p^{k}-1)} (1 \le d \le k \le n)$, there exists a subgroup $S_1$ of rank $k$ of $\mathbb{Z}_p^n$ such that $|S_1 \cap D| < d$.
	\end{Claim}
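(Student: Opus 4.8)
The plan is to establish the claim by a first-moment (averaging) argument over the family of all rank-$k$ subgroups, so that a single counting inequality forces the existence of a good $S_1$. First I would form the double sum $\Sigma := \sum_{S_1} |S_1 \cap D|$, where $S_1$ ranges over all rank-$k$ subgroups of $\mathbb{Z}_p^n$, and count it in two ways. The key move is to exchange the order of summation and count, for each fixed $e \in D$, the number of rank-$k$ subgroups containing $e$. Since $0^n \notin D$ by hypothesis, every $e \in D$ is nonzero, so Claim \ref{claim:numsubgroup2} applies and each such $e$ lies in exactly $T_2$ subgroups. Hence $\Sigma = |D| \cdot T_2$, and the average value of $|S_1 \cap D|$ over the $T_1$ subgroups is $\Sigma / T_1 = |D| \cdot T_2 / T_1$.

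Next I would simplify the ratio $T_1 / T_2$ using the explicit formulas of Claims \ref{claim:numsubgroup1} and \ref{claim:numsubgroup2}. All the common factors $(p^n - p^i)$ and $(p^k - p^i)$ for $1 \le i \le k-1$ cancel, leaving the clean identity $T_1 / T_2 = (p^n - 1)/(p^k - 1)$. Substituting the hypothesis $|D| < d(p^n-1)/(p^k-1) = d \cdot T_1 / T_2$ then yields $\text{average} = |D| \cdot T_2 / T_1 < d$. Finally, since $|S_1 \cap D|$ is a nonnegative integer for every $S_1$, an average strictly below $d$ means the values cannot all be at least $d$; hence some rank-$k$ subgroup $S_1$ satisfies $|S_1 \cap D| < d$, as required.

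There is no serious obstacle here, as the argument is a direct first-moment/pigeonhole computation. The only point demanding a little care is the cancellation giving $T_1/T_2 = (p^n-1)/(p^k-1)$, together with carrying the strict inequality of the hypothesis through the computation, so that the average is \emph{strictly} less than $d$ (which is precisely what lets us conclude that some term is $<d$ rather than merely $\le d$). I would also emphasize that the assumption $0^n \notin D$ is essential: it guarantees that every element of $D$ is counted with the uniform multiplicity $T_2$ furnished by Claim \ref{claim:numsubgroup2}, since $0^n$ would otherwise lie in all $T_1$ subgroups and invalidate the uniform count.
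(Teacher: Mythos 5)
Your proposal is correct and is essentially the paper's own argument: the paper also double-counts the incidences between elements of $D$ and rank-$k$ subgroups using Claims \ref{claim:numsubgroup1} and \ref{claim:numsubgroup2}, obtaining $T_2\cdot|D| \ge T_1\cdot d$ under the negation of the conclusion and deriving $|D| \ge d(p^n-1)/(p^k-1)$, a contradiction. Phrasing it as a first-moment/averaging bound rather than a proof by contradiction is only a cosmetic difference.
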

	
	\begin{proof}
		We prove the result by contradiction. Suppose for any group $S_1$ of rank $k$, there exist at least $d$ elements in $S_1 \cap D$. 
		Since $\mathbb{Z}_p^n$ has $T_1$ distinct subgroups of rank $k$ by Claim \ref{claim:numsubgroup1}, the total number of elements is $T_1\cdot d$. However, some elements may be counted repeatedly. Since every non-zero element in $\mathbb{Z}_p^n$ belongs to $T_2$ distinct subgroups of rank $k$ by Claim \ref{claim:numsubgroup2}, we have 
		$T_2\cdot|D| \ge {T_1}\cdot d$,
		which means $|D| \ge \frac{dT_1}{T_2} = \frac{d(p^n-1)}{(p^{k}-1)}$.
		It leads to a contradiction. Thus, there exists a subgroup $S_1$ of rank $k$ such that $|S_1 \cap D| < d$. 
	\end{proof}

	\medskip
	
	\noindent\textbf{Theorem \ref{Th:lower_non}.}
	\textit{Any nonadaptive classical deterministic algorithm solving $\mathsf{GSP}(p,n,k)$ needs $\Omega\left(\max\{k,\sqrt{k\cdot p^{n-k}}\}\right)$ queries.}
	\medskip

	\begin{proof}
		It is easy to see any nonadaptive classical deterministic algorithm solving $\mathsf{GSP}(p,n,k)$ needs $\Omega(k)$ queries by Lemma \ref{Lemma:lower2_ran}. Now we prove the other part. Suppose there exists an nonadaptive algorithm $\mathcal{A}$ which makes less than $\sqrt{\frac{k(p^n-1)}{(p^{k+1}-1)}}$ queries to solve $\mathsf{GSP}(p,n,k)$ correctly. Suppose the set of queried element in $\mathcal{A}$ is $Q$. Then $|Q| < \sqrt{\frac{k(p^n-1)}{(p^{k+1}-1)}}$, so we have $|T_Q| < \frac{k(p^n-1)}{(p^{k+1}-1)}$. By Claim \ref{Claim:numjoin1}, there exists a subgroup $\hat{S}$ of rank $k+1$ such that the number of non-zero elements in $\hat{S} \cap T_Q$ is less than $k$. Let $\hat{S'} = \langle \hat{S} \cap T_Q \rangle$. Then $rank(\hat{S'}) < k$. 
		Let $S_1, S_2$ be two different subgroups of rank $k$ of $\mathbb{Z}_p^n$ such that $\hat{S'} < S_1,S_2 < \hat{S}$.
	
		Now we pick a function $f_1$ satisfying that $f_1(x) = f_1(y)$ iff $x - y \in S_1$ as the input of Algorithm $\mathcal{A}$. Since Algorithm $\mathcal{A}$ solves the problem correctly, its output is $S_1$. Next we construct a function $f_2$ as the input of Algorithm $\mathcal{A}$, where $f_2$ satisfies that (i) $\forall x\in Q, f_2(x) = f_1(x)$; 
		(ii) $f_2(x) = f_2(y)$ iff $x - y \in S_2$. Conditions (i) and (ii) mean for any $x,y\in Q$, $f_2(x) = f_2(y)$ iff $x - y \in S_1\cap T_Q$. Since $S_1\cap T_Q \subseteq \hat{S}\cap T_Q \subseteq \hat{S'} \subseteq S_2$, conditions (i) and (ii) are simultaneously satisfiable. Condition (i) implies the output of Algorithm $\mathcal{A}$ is still $S_1$, which is a wrong answer. Thus, Algorithm $\mathcal{A}$ does not solve the problem correctly for any input, which leads to a contradiction. Therefore, any classical deterministic algorithm solving $\mathsf{GSP}(p,n,k)$ needs at least $\sqrt{\frac{k(p^n-1)}{(p^{k+1}-1)}}$ queries, i.e., 
		we need to make $\Omega\left(\sqrt{k\cdot p^{n-k}}\right)$ queries.
	\end{proof}

	\section{Algorithm and upper bound}
	\label{sec:upperbound}
	In this section, we propose an algorithm to solve $\mathsf{GSP}(p,n,k)$ and analyze its query complexity, which establishes the upper bound in Theorem \ref{Th:upper}. 
	\medskip
	
	\noindent
	\textbf{Theorem \ref{Th:upper}.}
	\textit{There exists a classical deterministic algorithm to solve $\mathsf{GSP}(p,n,k)$ using $O\left(\max\{k,\sqrt{k\cdot p^{n-k}}\}\right)$ queries.
	}
	\medskip
	
	
	For our problem, we say $x$ and $y$ collide iff $f(x) = f(y)$, which means $x - y \in S$. Thus, $x$ and $y$ collide if $x$ and $y$ belong to the same coset of $S$. So the probability of the collision in each pair of elements is $\frac{1}{p^{n-k}}$. Once we find $k$ linearly independent elements in $S$, we will determine what $S$ is. Thus, there exists a randomized algorithm in intuition as follows. By making $O(\sqrt{k \cdot p^{n-k}})$ queries randomly and uniformly, the expected number of collisions is $O(k p^{n-k}\frac{1}{p^{n-k}})=O(k)$, from which it is likely to find a generating set of $S$ with a high probability. 
	
	This randomized algorithm gives an intuitive guess on the upper bound of query complexity. However, it seems hard to derandomize it directly. Therefore, we give an ingenious construction of the query set to ensure that a generating set of $S$ can be found in our main algorithm (i.e., Algorithm \ref{al:main}), which is based on group theory and a divide-and-conquer subroutine (i.e., Algorithm \ref{al:findGroup}). In the following, we first introduce Algorithm \ref{al:findGroup} and analyze its correctness and query complexity. Then we describe and analyze Algorithm \ref{al:main}.	
	\subsection{Critical subroutine}
	In Algorithm \ref{al:findGroup}, given a subgroup $A$ satisfying that all elements of $A$ have been queried and $A \cap S = \{0^n\}$, a subgroup $S_1 \le S$, and an integer $d$, we wish to find a subgroup $B$ of rank $d$ such that $A \cap B = \{0^n\}$ and $(A + B)\cap S = \{0^n\}$. Also, we query all the elements in $B$ having not been queried. Meanwhile, we find a subgroup $S_2$ such that $S_1 \le S_2 \le S$, which is an extra bonus. 
	
	We use the idea of recursion. First, we find a subgroup $B'$ of rank $d-1$ such that $A \cap {B'} = \{0^n\}, (A + B') \cap S = \{0^n\}$ and a subgroup $S_2$ such that $S_1 \le S_2 \le S$. We also query all the elements in $B'$. Next, we query an element $u \in \mathbb{Z}_p^n \setminus(S_2 + A + B')$. If there exists an element $b \in B'$ such that $f(b) = f(u)$, then we have $b- u \in S$. Thus, we expand $S_2$, and query an element $u \in \mathbb{Z}_p^n \setminus(S_2 + A + B')$ again. We repeat this procedure until for any $b \in B'$ we have $f(b) \neq f(u)$. Then we query all the elements in $\langle B'\cup u \rangle \setminus(B' \cup u)$, i.e., the elements not queried in $\langle B'\cup u \rangle$. Similarly, if there exists $a\in A$, $b\in \langle B'\cup u \rangle\setminus \{0^n\}$ such that $f(a) = f(b)$, then we have {$a - b \in S$}. So we expand $S_2$ and go back to Step 6. Only if for any $a\in A, b\in \langle B'\cup u \rangle\setminus \{0^n\}$ we have $f(a) \neq f(b)$, we exit the outer loop. Then we obtain $B = \langle B'\cup u \rangle$ and return $(B,S_2)$. At this point, all the elements in $B$ have been queried.
	\begin{algorithm}
		\caption{findGroup}
		\label{al:findGroup}
		\begin{algorithmic}[1]
			\REQUIRE $A,S_1,d$ s.t. $0 \le d \le n-k,S_1 \le S, A \cap S = \{0^n\}$, {all elements of $A$ have been queried};
			\ENSURE $B,S_2$ s.t. $rand(B) = d, A \cap B = \{0^n\},(A + B) \cap S = \{0^n\}$, {$S_1 \le S_2 \le S$, all elements of $B$ have been queried};
			\IF {d = 0}
			\RETURN $(\{0^n\}, S_1)$; 
			\ENDIF
			\STATE $(B',{S_2}) \leftarrow findGroup(A,S_1,d-1);$
			\REPEAT
			\REPEAT
			\STATE Query a non-zero element $u$ s.t. $u \in \mathbb{Z}_p^n$ and $u \notin ({S_2}+ A + B')$;
			\IF {$\exists b \in B'$ s.t. $f(b) = f(u)$}
			\STATE ${S_2} \leftarrow \langle {S_2} \cup (b - u) \rangle$;
			\ENDIF
			\UNTIL{$\forall b \in B'$ s.t. $f(b) \neq f(u)$}
			\STATE Query all the elements in $\langle B'\cup u\rangle \setminus(B' \cup u)$;
			\IF {$\exists a \in A,b\in \langle B'\cup u \rangle \setminus \{0^n\}$ s.t. $f(a) = f(b)$}
			\STATE ${S_2} \leftarrow \langle {S_2} \cup (a - b) \rangle$;
			\ENDIF
			\UNTIL{$\forall a \in A,b\in \langle B'\cup u \rangle \setminus \{0^n\}$ s.t. $f(a) \neq f(b)$}
			\STATE $B \leftarrow \langle B' \cup u \rangle$; 
			\RETURN $(B, {S_2})$;
		\end{algorithmic}
	\end{algorithm}
	
	We show that the algorithm obtains the desired output below. Since $u \notin (A+B')$, we have $A \cap B'_u = \emptyset$, which implies $A \cap \langle B' \cup u \rangle = \{0^n\}$, i.e., $A \cap B = \{0^n\}$ by Fact \ref{Fact:join}.	Since $f(a) \neq f(b)$ implies $a-b \notin S$ for any $a\in A, b\in B\setminus \{0^n\}$, we have $((A+B)\setminus\{0^n\}) \cap S = \emptyset$. In addition, because $0^n \in S$, we have $(A+B) \cap S = {\{0^n\}}$. Moreover, while we expand $S_2$, $S_2$ is always a subgroup of $S$. Thus, we obtain desired $B$ and $S_2$.
	
	
	Now, we analyze the query complexity of Algorithm \ref{al:findGroup} by proving Lemma \ref{Lemma:querytime}. 
	
	\begin{Lemma}
		\label{Lemma:querytime}
		In Algorithm \ref{al:findGroup}, suppose $rank(S_i) = k_i$ for $i=1,2$. If $A\ {\neq}\ \{0^n\}$, the number of queries is $p^d-1+(k_2-k_1)\cdot(p^d-p^{d-1})$. If $A\ {=}\ \{0^n\}$, the number of queries is at most $p^d-1+(k_2-k_1)$.
	\end{Lemma}
	
	\begin{proof}
		We consider the following two cases: $A \neq \{0^n\}$ and $A = \{0^n\}$. 
		
		Case 1: $A \neq \{0^n\}$. In Step 7 and 12, we make $p^{d}-p^{d-1}$ queries to query all the element in $\langle B'\cup u \rangle \setminus B'$. Suppose we never execute Step 9 and 14. When we recursively call the algorithm itself, the number of queries is $(p^d-p^{d-1})+(p^{d-1}-p^{d-2})+\cdots+(p^1-p^0)= p^d-1$ in total. However, once we execute Step 9, we will go back to the inner loop and make one extra query. Once we execute Step 14, we will go back to the outer loop and make at most $p^d-p^{d-1}$ extra queries. Moreover, every time we execute Step 9 or 14, the rank of $S_2$ adds one. Since we input $S_1$ and obtain $S_2$, the total number {of times} we execute Step 9 and 14 is $k_2-k_1$. Thus, we need at most extra $(k_2-k_1)\cdot(p^d-p^{d-1})$ queries in order to expand $S_1$ to $S_2$. Finally, the total number of queries is at most $p^d-1+(k_2-k_1)\cdot(p^d-p^{d-1})$. 
		
		Case 2: $A = \{0^n\}$. Different from Case 1, we never execute Step 14 in this case. Suppose we go into Step 13. At this time, we have $f(b) \neq f(u)$ for any $b \in B'$. Thus {$u - b \notin S$}. Since $u-B'=u+B'=B'+u=B'_u$, we have $B'_u \cap S = \emptyset$, which implies $\langle B' \cup u \rangle \cap S = \{0^n\}$. Hence, we have $f(a) \neq f(b)$ for any $a \in \{0^n\}, b\in \langle B' \cup u \rangle\setminus \{0^n\}$. Therefore, the if-statement is always judged to be false, and then we jump out of Step 14. On the other hand, similar to Case 1, if we never execute Step 9, then the number of queries is $p^d-1$. Once we execute Step 9, we will go back to the inner loop and make one extra query. Thus, we need extra $k_2-k_1$ queries in order to expand $S_1$ to $S_2$. Finally, the total number of queries is $p^d-1+(k_2-k_1)$.
	\end{proof}

	\subsection{Main algorithm}
	In this section, we introduce and analyze Algorithm \ref{al:main}. The main idea of Algorithm \ref{al:main} is as follows. First, we try to obtain two subgroups $A,B$ such that the elements between $A$ and $B$ do not collide by calling Algorithm \ref{al:findGroup}. We may find some collisions in this process. Second, we try to find some other collisions between $B$ and cosets of $A$. By the above collisions, we make sure that a generating set of $S$ will be found. 
		\begin{algorithm}
		\caption{Find $S$}
		\label{al:main}
		\begin{algorithmic}[1]
			\REQUIRE $d \in \{0,...,n-k\}$;
			\ENSURE subgroup $S$;
			\STATE {Query $0^n$;}
			\STATE $(A,S_1) \gets findGroup(\{0^n\},\{0^n\},n-k-d)$;
			\STATE $(B,S_2) \gets findGroup(A,S_1,d)$;
			\STATE $V \leftarrow A + B$, $W \leftarrow \overline{ V + S_2 }$, find a basis $\{w_i\}$ of $W$;
			\FOR{$i = 1\ to\ rank(W)$}  
			\STATE Query all the elements in $B_{w_i}$, and then we can find a pair of $(a_i,b_i)$ s.t.\ {$a_i \in A, b_i \in B_{w_i}$} and $f(a_i) = f(b_i)$;
			\STATE $s_{w_i} \leftarrow b_i-a_i$;
			\ENDFOR
			\RETURN $\langle S_2 \cup \{s_{w_i}\} \rangle$;
		\end{algorithmic}
	\end{algorithm}
	

	Whatever the value of $d$ is, Algorithm \ref{al:main} will find $S$, but there exists an optimal value such that the number of queries used is minimum. The details will be discussed later. Now 
	we describe Algorithm \ref{al:main} in detail and show its correctness. Let $rank(S_i) = k_i$ for $i=1,2$. {In Step 1, we query $0^n$ to obtain the value of $f(0^n)$, which will be used in Step 2.} In Step 2, we find a subgroup $A$ such that $A\cap S = \{0^n\}$ and $rank(A) = n-k-d$ by calling Algorithm \ref{al:findGroup}. At the same time, we find a subgroup $S_1 \le S$. In Step 3, we find a subgroup $B$ such that $A \cap B = \{0^n\}$, $(A + B)\cap S = \{0^n\}$, and $rank(B) = d$ by calling Algorithm \ref{al:findGroup} again. Simultaneously, we find a subgroup $S_2$ such that $S_1 \le S_2 \le S$. In Step 4, we obtain $W$, a complement of $V+ S_2$ in $\mathbb{Z}_p^n$. Let $\{w_i\}$ be a basis of $W$. In Step 5-8, we query all the elements in $B_{w_i}$ to obtain $s_{w_i}$ for any $i$. Next we elaborate why we can obtain the desired $\{s_{w_i}\}$. Suppose $f(a) \neq f(b)$ for any {$a \in A, b \in B_{w_i}$} in Step 6. Then $b - a \notin S$ for any {$a \in A, b \in B_{w_i}$}. Since $V_{w_i} = V + w_i = B+A+w_i = B-A+w_i = B+w_i-A = \{b-a\ |\ a\in A, b\in B_{w_i}\}$, we have $ V_{w_i} \cap S = \emptyset$. However, Lemma \ref{Lemma:coset} implies that $V_{w_i} \cap S \neq \emptyset$, which leads to a contradiction. Thus, there exists $a_i \in A, b_i \in B_{w_i}$ such that $f(b_i) = f(a_i)$. Then $s_{w_i} = b_i - a_i$ is a non-zero element in $V_{w_i} \cap S$. Finally, Lemma \ref{Lemma:findS} implies $S_2 \cup \{s_{w_i}\}$ is a generating set of $S$. Thus, we find $S$.
	
	\begin{Lemma}
		\label{Lemma:coset}
		Suppose $V$ is a subgroup of $\mathbb{Z}_p^n$ such that $V \cap S = \{0^n\}$ and $rank(V) = n-k$. For $w \notin V$, we have $V_w \cap S \neq \emptyset$.
	\end{Lemma}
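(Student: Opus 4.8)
The plan is to show that the hypotheses force $V$ and $S$ to be complementary subgroups of $\mathbb{Z}_p^n$, that is $V + S = \mathbb{Z}_p^n$, and then to exhibit a point of $V_w \cap S$ directly from the resulting decomposition of $w$. The whole argument is short once the right structural observation is in place.

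First I would invoke Fact \ref{Fact:exgroup}. Since $V \cap S = \{0^n\}$, we obtain $rank(V + S) = rank(V) + rank(S) = (n-k) + k = n$. A subgroup of $\mathbb{Z}_p^n$ of rank $n$ has cardinality $p^n$, so it must equal the whole group; hence $V + S = \mathbb{Z}_p^n$. This is the crux of the proof: the rank condition $rank(V) = n-k$ combined with the trivial intersection upgrades ``$V$ and $S$ are independent'' to ``$V$ and $S$ together span $\mathbb{Z}_p^n$.''

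With this in hand, the conclusion is immediate. Because $w \in \mathbb{Z}_p^n = V + S$, I can write $w = v + s$ with $v \in V$ and $s \in S$. Taking the element $-v \in V$, we get $(-v) + w = s$, and by definition $(-v) + w \in V_w$, so $s \in V_w \cap S$ and the intersection is nonempty. I would also record the role of the hypothesis $w \notin V$: it guarantees $s \neq 0^n$, since $s = 0^n$ would give $w = v \in V$. Thus $V_w \cap S$ in fact contains a \emph{non-zero} element, which is exactly the non-trivial collision needed in the correctness analysis of Algorithm \ref{al:main}.

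The main obstacle here is conceptual rather than computational, namely spotting that the two rank conditions make $V$ and $S$ tile the entire group; after that observation, the remaining steps are a one-line existence argument and a one-line check of non-triviality, with no routine calculation to grind through.
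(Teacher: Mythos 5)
Your proof is correct, and it takes a genuinely different route from the paper's. The paper argues by contradiction at the level of the enlarged subgroup $\langle V \cup w \rangle$: assuming $\langle V \cup w \rangle \cap S = \{0^n\}$ would force $rank(\langle V \cup w \cup S\rangle) = n+1 > n$, and it then transfers the resulting non-trivial intersection down to the single coset $V_w$ (via the decomposition of $\langle V \cup w\rangle$ into cosets of $V$, in the spirit of Fact \ref{Fact:join}). You instead apply the rank count directly to $V$ and $S$ to get $V + S = \mathbb{Z}_p^n$, decompose $w = v + s$, and exhibit $s = (-v) + w$ as an explicit element of $V_w \cap S$. Both arguments rest on the same dimension count $(n-k)+k=n$, but yours is constructive rather than by contradiction, and it sidesteps the paper's slightly loose step ``$\langle V \cup w\rangle = V \cup V_w$'', which as written is only literal for $p=2$ (for general $p$ one has $\langle V\cup w\rangle = \bigcup_{\alpha\in\mathbb{Z}_p} V_{\alpha w}$). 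Your closing observation that $w\notin V$ forces $s\neq 0^n$ is also sound (equivalently, $0^n\notin V_w$), and it is exactly the non-triviality used in the correctness analysis of Algorithm \ref{al:main}.
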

	
	\begin{proof}
		For $w\notin V$, $\langle V \cup w \rangle$ is a group of rank $n-k+1$. First we prove $\langle V \cup w \rangle \cap S \neq \{0^n\}$ by contradiction. Suppose $\langle V \cup w \rangle \cap S = \{0^n\}$. Then we have $rank(\langle V \cup w \cup S\rangle) = n-k+1+k = n+1$. Since $\langle V \cup w \cup S \rangle$ is a subgroup of $\mathbb{Z}_p^n$, we have $rank(\langle V \cup w \cup S \rangle) \le n$, which leads to a contradiction. Thus, we have $\langle V \cup w \rangle \cap S \neq \{0^n\}$, which implies $V_w \cap S \neq \emptyset$ by Fact \ref{Fact:join}.

	\end{proof}
	
	\begin{Lemma}
		\label{Lemma:findS}
		In Algorithm \ref{al:main}, $S_2 \cup \{s_{w_i}\}$ is a generating set of $S$.
	\end{Lemma}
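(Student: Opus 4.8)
The plan is to show that $S_2 \cup \{s_{w_i}\}$ lies inside $S$ and consists of exactly $k$ linearly independent elements, so that it must generate all of $S$. The containment is immediate: each $s_{w_i}$ was extracted as a non-zero element of $V_{w_i} \cap S$, hence $s_{w_i} \in S$, and $S_2 \le S$ by the construction in Steps 1--2. Therefore $\langle S_2 \cup \{s_{w_i}\}\rangle \le S$, and since $rank(S) = k$ it suffices to prove that $S_2 \cup \{s_{w_i}\}$ is linearly independent of cardinality $k$; the two subgroups will then coincide by matching ranks.

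First I would count the elements. Write $rank(S_2) = k_2$. Because $V = A+B$ satisfies $V \cap S = \{0^n\}$ and $S_2 \le S$, we get $V \cap S_2 = \{0^n\}$, so Fact \ref{Fact:exgroup} yields $rank(V+S_2) = rank(V) + k_2 = (n-k) + k_2$. Since $W = \overline{V+S_2}$ is a complement of $V+S_2$ in $\mathbb{Z}_p^n$, its rank is $n - (n-k) - k_2 = k - k_2$, and the loop in Steps 4--7 therefore produces exactly $k-k_2$ vectors $s_{w_i}$. Together with a basis of $S_2$, the candidate generating set thus has $k_2 + (k-k_2) = k$ elements, the correct count for a rank-$k$ subgroup.

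The core of the argument is linear independence, and here I would exploit the decomposition $\mathbb{Z}_p^n = (V+S_2) \oplus W$ supplied by the definition of complement subgroup (Definition \ref{def:complement_group}), namely $(V+S_2) + W = \mathbb{Z}_p^n$ and $(V+S_2)\cap W = \{0^n\}$. The crucial observation is that $s_{w_i} \in V_{w_i} = V + w_i$, so $s_{w_i} = v_i + w_i$ for some $v_i \in V$. Fixing a basis $\{b_j\}$ of $S_2$, suppose $\sum_j \beta_j b_j + \sum_i \gamma_i s_{w_i} = 0$. Substituting $s_{w_i} = v_i + w_i$ splits the left-hand side into a part $\sum_j \beta_j b_j + \sum_i \gamma_i v_i \in V + S_2$ and a part $\sum_i \gamma_i w_i \in W$. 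By the directness of the sum, both parts vanish separately; linear independence of the basis $\{w_i\}$ of $W$ then forces every $\gamma_i = 0$, after which linear independence of $\{b_j\}$ forces every $\beta_j = 0$.

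Having shown $S_2 \cup \{s_{w_i}\}$ is a linearly independent set of $k$ elements contained in $S$, I conclude $rank\left(\langle S_2 \cup \{s_{w_i}\}\rangle\right) = k = rank(S)$, and since $\langle S_2 \cup \{s_{w_i}\}\rangle \le S$ the two coincide, so $S_2 \cup \{s_{w_i}\}$ is a generating set of $S$. I expect the only delicate points to be correctly identifying $rank(W) = k-k_2$ (which pins down the number of vectors produced) and invoking the direct-sum decomposition together with the reduction $s_{w_i} = v_i + w_i$; once these are in place the independence computation is entirely routine.
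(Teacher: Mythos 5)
Your proposal is correct and follows essentially the same route as the paper: both write $s_{w_i} = v_i + w_i$ with $v_i \in V$, deduce linear independence of $S_2$'s basis together with the $\{s_{w_i}\}$ from the decomposition of $\mathbb{Z}_p^n$ into $V$, $S_2$, and $W$ (the paper expands in an explicit concatenated basis, you invoke the direct sum $(V+S_2)\oplus W$, which is the same argument), and conclude by counting $k_2 + (k-k_2) = k$ independent elements inside the rank-$k$ group $S$.
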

	
	\begin{proof}
		Let $\{v_i\}$ be a basis of $V$, $\{s_i\}$ be a basis of $S_2$. Since $V \cap S_2 = \{0^n\}$, we have $rank(V+S_2) = rank(V)+rank(S_2) = n-k+k_2$. Since $W$ is a complement subgroup of $V + S_2$, we have $rank(W) = n-(n-k+k_2) = k-k_2$ and $\{u_i\} = \{v_i\} \cup \{s_i\} \cup \{w_i\}$ constructs a basis of $\mathbb{Z}_p^n$ by Fact \ref{Fact:exgroup}. Since $s_{w_i} \in V_{w_i}$, we have
		$$
		s_{w_i} = \sum_{j=1}^{n-k}\alpha_{ij}v_j+w_i.
		$$
		Suppose we have                                                                             
		$$
		\sum_{i=1}^{k_2}\beta_i s_i + \sum_{i=1}^{k-k_2}\gamma_i s_{w_i} = 0.
		$$
		That is, 
		$$
		\sum_{i=1}^{k_2}\beta_i s_i+\sum_{i=1}^{k-k_2}\sum_{j=1}^{n-k} \gamma_i \alpha_{ij}v_j +
		\sum_{i=1}^{k-k_2}\gamma_i w_i = 0, 
		$$
		which implies 
		$\beta_i = 0$ for any $1 \le i \le k_2$, and $\gamma_i = 0$ for any $1 \le i \le k-k_2$. Thus,
		$\{s_{w_i}\}\cup \{s_i\}$ is linearly independent by Fact \ref{Fact:linear}. Since $rank(S) = k$, the cardinality of a linearly independent subset of {$S$} is at most $k$. Thus, $|\{s_{w_i}\}\cup \{s_i\}| = k$ implies $\{s_{w_i}\}\cup \{s_i\}$ consists of a basis of $S$. Since $\{s_i\} \subseteq S_2$, $S_2 \cup \{s_{w_i}\}$ constructs a generating set of $S$.
	\end{proof}
	
	Now we analyze the query complexity of Algorithm \ref{al:main}. {The number of queries in Step 1 is 1.} Lemma \ref{Lemma:querytime} implies that the number of queries in Step 2 is $p^{n-k-d}-1+k_1$, and the number of queries in Step 3 is at most $p^{d}-1+(k_2-k_1)\cdot (p^d-p^{d-1})$. In addition, the number of queries in Step 5-8 is $(k-k_2)\cdot p^d$. Let $TQ$ denote the total number of queries. Then
	$$
	\begin{aligned}
	TQ &\le {1\ +\ } p^{n-k-d}-1+k_1+p^{d}-1+(k_2-k_1)\cdot (p^d-p^{d-1})+(k-k_2)\cdot p^d \\
	&\le p^{n-k-d}+(k+1)\cdot p^d.
	\end{aligned}
	$$
	If $n \ge k+ \log_p k$, let $d = (n-k-\log_p k)/2$. Then we get
	$$ TQ \le \sqrt{k\cdot p^{n-k}}+\frac{k+1}{\sqrt{k}}\sqrt{p^{n-k}} = O(\sqrt{k \cdot p^{n-k}}). $$
	Otherwise, let $d = 0$, we have
	$$ TQ \le p^{n-k}+k+1 < 2k+1 = O(k). $$
	That is, the number of queries of our algorithm is 
	$$
	TQ = 
	\left\{
	\begin{array}{lrr}
	O(\sqrt{k \cdot p^{n-k}}) & n \ge k+ \log_p k& \\
	O(k) & n < k+ \log_p k&  
	\end{array}
	\right..
	$$
	It is equivalent to
	$$
	TQ = O(\max\{k, \sqrt{k \cdot p^{n-k}}\}),
	$$
	which meets the upper bound in Theorem \ref{Th:upper}.

	\section{Conclusion}
	\label{sec:conclusion}
	In this paper, we obtain an almost full characterization of the classical query complexity for the generalized Simon's problem $\mathsf{GSP}(p,n,k)$. We prove that any classical (deterministic or randomized) algorithm has to make $\Omega\left(\max\{k, \sqrt{p^{n-k}}\}\right)$ queries. Also, we prove that any classical nonadaptive deterministic algorithm has to make $\Omega\left(\max\{k, \sqrt{k\cdot p^{n-k}}\}\right)$ queries. Moreover,
	we devise {a} deterministic algorithm for this problem with query complexity of $O\left(\max\{k, \sqrt{k \cdot p^{n-k}}\}\right)$.
	These results clarify the computational power gap between quantum and classical computing on this problem. Furthermore, two open problems remain: 
	\begin{itemize}
	\item{It still has a gap between the lower bound and the upper bound for {adaptive} classical query complexity of the generalized Simon's problem.} 
	
	\item{If we generalize the domain of $f$ from $\mathbb{Z}_p^n$ to an arbitrary finite Abelian group $\mathbb{Z}_{m_1} \oplus \mathbb{Z}_{m_2} \oplus \cdots \oplus \mathbb{Z}_{m_n}$, where $\mathbb{Z}_{m_i}$ denotes the additive cyclic group of order $m_i$, what is the optimal classical query complexity in this case?}
	\end{itemize}
	
	\section*{Declaration of competing interest}
	The authors declare that they have no known competing financial interests or personal relationships that could have
appeared to influence the work reported in this paper.
	
	\section*{Acknowledgment}
	The authors are very grateful to the editors and anonymous reviewers for their helpful comments, and particularly   for pointing out a flaw in the original proof of Lemma \ref{Lemma:lower1_ran} and providing us with a simpler proof idea. This work was supported by the National Natural Science Foundation of China (Grant No. 61772565), the Guangdong Basic and Applied Basic Research Foundation (Grant No. 2020B1515020050), the Key Research and Development project of Guangdong Province (Grant No. 2018B030325001).
	
	\bibliography{GSP}

\newpage

\begin{appendix}
\section{An exact quantum algorithm for generalized Simon's problem}
\label{exact quantum algorithm}
Brassard and H\o yer \cite{brassard1997exact} provided an algorithm to solve Simon's problem with $O(n)$ queries exactly. In this section, we review the algorithm and make some small adjustments to the algorithm to solve $\mathsf{GSP}(p,n,k)$. Particularly, we replace the method of improving the success probability in \cite{brassard1997exact} with a more general method, amplitude amplification \cite{Brassard2002quantum}, such that the algorithm needs to only make $O(n-k)$ queries. 

We first give some notation. Let $G = \mathbb{Z}_p^n$. {In $\mathsf{GSP}(p,n,k)$, without loss of generality, suppose the elements in $X$ are represented by $\lceil n\log p \rceil$-bit strings. Quantum query oracle $O_f$ is defined as $O_f \ket{g}\ket{b} = \ket{g}\ket{f(g)\oplus b}$ for any $g \in G$, $b \in X$, where $\oplus$ denotes bitwise addition modulo 2.} The query complexity of a quantum algorithm is the number of $O_f$ used in the algorithm. Let $T_0$ be a transversal of $S$ in $G$, i.e, $T_0$ consists of exactly one representative {from} each coset of $S$. For $g,h \in G$, let
$$
g \cdot h = (\sum_{i=1}^n g_i h_i) \mod p,
$$
where $g = (g_1,...,g_n)$ and $h = (h_1,...,h_n)$. For $H \le G$, let
$$
H^{\perp} = \{g \in G| g \cdot h =0 \text{ for all }h \in H\}
$$
denote the \textit{orthogonal subgroup} of $H$. Let $$\mu(g,h) = e^{\frac{2\pi i}{p}{g\cdot h}},$$ 
$$F_G = \frac{1}{\sqrt{|G|}}\sum_{g,h\in G}\mu(g,h)|g\rangle \langle h|,$$ $$\phi_h = \sum_{g \in G}\mu(h,g)|g\rangle \langle g|.$$ 
For $X \subseteq G$ and $g \in G$, let $$\ket{\phi_gX} = \frac{1}{{\sqrt{|X|}}}\sum_{x \in X}\mu(g,x)\ket{x}.$$  

Next, we introduce the algorithm process. It is worth noting that the idea of the quantum algorithm is to find a basis of $S^{\perp}$. Then $S = {(S^{\perp})}^{\perp}$. We give Algorithm \ref{sub1} first, which is a subroutine in Simon's algorithm and generalized to the case of $\mathbb{Z}_p$. The goal of Algorithm \ref{sub1} is to obtain $\ket{\Psi}$. When we measure the first register of $\ket{\Psi}$, we can get an element in $S^{\perp}$ with certain.

\begin{algorithm}
	\caption{Simon subroutine}
	\label{sub1}
	\textbf{Input:} Quantum state $\ket{0^n}\ket{0}$;
	
	\textbf{Procedure:}
	\begin{enumerate}

		\item Apply the inverse of transform $F_G$ to the first register producing an equally-weighted superposition of all elements in the group $G$:
		$$
		\frac{1}{\sqrt{|G|}}\sum_{g \in G} \ket{g}\ket{0}.
		$$
		
		\item Apply $O_f$, producing a superposition of all cosets of {$S$}:
		$$
		\frac{1}{{\sqrt{|G|}}}\sum_{g \in G} \ket{g}\ket{f(g)} = \frac{1}{\sqrt{T_0}}\sum_{t \in T_0}\ket{t + S}\ket{f(t)}.
		$$
		
		\item Apply $F_G$ to the first register, producing a superposition over the orthogonal subgroup $S^{\perp}$:
		$$
		\ket{\Psi} = \frac{1}{{\sqrt{|T_0|}}}\sum_{t \in T_0}\ket{\phi_t S^{\perp}}\ket{f(t)}.
		$$
	\end{enumerate}
	\textbf{Output:} Quantum state $\ket{\Psi}$.
\end{algorithm}

Suppose we have obtained some element $y$ in $S^{\perp}$, we do not wish to obtain it again. Instead, we wish to find a basis of $S^{\perp}$ as soon as possible. Thus, Algorithm \ref{sub2} is proposed to shrink the range of superposition and avoiding obtaining the same elements. 

\begin{algorithm}
	\caption{Shrinking a subgroup}
	\label{sub2}
	\textbf{Input:} Quantum state $\ket{\phi_g H}\ket{0}$; $H \le G$ be a nontrivial subgroup; $y \in H/\{0^n\}$; $j$ such that $y_j \neq 0$ and $K = \{h \in H|h_j = 0\}$.
	
	\textbf{Procedure:}
	\begin{enumerate}
		\item Apply operation $U_1: U_1\ket{x}\ket{z} = \ket{x}\ket{x+z}$ to the $j$th qubit in the first register and the second register of $\ket{\phi_g H}\ket{0}$, producing the state:
		
		\begin{align*}
		&\frac{1}{\sqrt{|H|}}\sum_{h \in H}\mu(g,h)\ket{h}\ket{h_j} \\ = &\frac{1}{\sqrt{p}}\sum_{i \in \mathbb{Z}_p}\mu(g,iy)\left(\frac{1}{\sqrt{|K|}}\sum_{k \in K}\mu(g,k)\ket{iy+k}\right)\ket{i}.
		\end{align*}

		\item Apply operator $U_2:$ {$U_2\ket{x}\ket{i} = \ket{x-iy}\ket{i}$} to the above state. This produces
		\begin{align*}
	  &\frac{1}{\sqrt{p}}\sum_{i \in \mathbb{Z}_p}\mu(g,iy)\left(\frac{1}{\sqrt{|K|}}\sum_{k \in K}\mu(g,k)\ket{k}\right)\ket{i} \\
	  = & \ket{\phi_g K}\left(\frac{1}{\sqrt{p}}\sum_{i \in \mathbb{Z}_p}\mu(g,iy)\ket{i}\right).
		\end{align*}

		\item Apply {$F^{\dagger}_{\mathbb{Z}_p}$: $F^{\dagger}_{\mathbb{Z}_p}\ket{i}= \frac{1}{\sqrt{p}}\sum_{a\in \mathbb{Z}_p}\mu(i,-a)\ket{a}$} to the second register, producing a superposition: 
		$$
		\ket{\phi_g K}\ket{g \cdot y}.
		$$
	\end{enumerate}
	\textbf{Output:} Quantum state $\ket{\phi_g K}\ket{g \cdot y}$.
\end{algorithm}

Suppose $Y = \{y_1, ..., y_m\}$ is a known linearly independent set in $S^{\perp}$ ($0 \le m < n-k$). Then there exists a subgroup $K_m \le S^{\perp}$ such that $S^{\perp} = K_m + \langle Y \rangle$ and a quantum routine $\mathcal{A}_m$ such that $\mathcal{A}_m \ket{\phi_g S^{\perp}}\ket{0^m} = \ket{\phi_g K_m}\ket{g \cdot y_1, ..., g \cdot y_m}$ by applying Algorithm \ref{sub2} repeatedly. Moreover, we have a quantum routine $\mathcal{A}'_m$ such that $\mathcal{A}'_m\ket{0^n}\ket{0}\ket{0^m} = \frac{1}{\sqrt{|T_0|}}\sum_{t \in T_0}\ket{\phi_t K_m}\ket{f(t)}\ket{t \cdot y_1,...,t \cdot y_m}$ by applying Algorithm \ref{sub1} and $\mathcal{A}_m$. Algorithm \ref{sub1} makes one query, $\mathcal{A}_m$ does not make queries, thus $\mathcal{A}'_m$ makes one query.
 

If we can measure the first register of the final state in $\mathcal{A}'_m$, we can obtain an element $y \in K_m$. Once $y$ is nonzero, we can add {$y$} into $Y$ to get a larger known linearly independent set in $S^{\perp}$. Thus, we say Algorithm $\mathcal{A}'_m$ is successful if the measurement result of the first register of the final state is a nonzero element in $K_m$. The success probability of $\mathcal{A}'_m$ is $1-1/|K_m| = 1-1/p^{n-k-m}$. 

Furthermore, we wish to avoid obtaining zero element after measurement. Since $\frac{1}{\sqrt{1-1/p^{n-k-m}}} \le \frac{1}{\sqrt{1-1/p}} \le \sqrt{2}$, we use $O(\frac{1}{\sqrt{1-1/p^{n-k-m}}}) = O(1)$ applications of $\mathcal{A}'_m$ and ${\mathcal{A}'_m}^{-1}$ to get an algorithm $\mathcal{A}''_m$ that can obtain a nonzero element in $K_m$ with certain by the amplitude amplification technique of \cite{Brassard2002quantum}. ${\mathcal{A}'_m}^{-1}$ needs to use $O^{-1}_f$ and other operations not related to $f$. Since $O_f^2=I$ by definition, we have $O_f^{-1} = O_f$. 
Thus, $\mathcal{A}''_m$ makes $O(1)$ queries.

Finally, we gave Algorithm \ref{qumain} to solve $\mathsf{GSP}(p,n,k)$ exactly. During the iteration of each time, an element is added into $Y$ to consist of a larger linearly independent set in $S^{\perp}$. Finally, $\langle Y \rangle = S^{\perp}$.
\begin{algorithm}
\caption{An exact quantum algorithm for $\mathsf{GSP}(p,n,k)$}
		\label{qumain}
		\begin{algorithmic}[1]
			\STATE $Y = \emptyset$; 
			\FOR{$i =0\ to\ n-k-1$}  
			\STATE {Find $K_i$ using Algorithm \ref{sub2};}
			\STATE For a subgroup $K_i \le S^{\perp}$ satisfying that $S^{\perp} = K_i + \langle Y \rangle$,
			perform Algorithm $\mathcal{A}''_i$ to obtain a nonzero element $y_i \in K_i$;
			\STATE $Y = Y \cup \{y_i\}$;
			\ENDFOR
			\RETURN $\langle Y \rangle ^{\perp}$;
		\end{algorithmic}
	\end{algorithm}
Since $\mathcal{A}''_i$ makes $O(1)$ queries for $1 \le i \le k$, the total number of queries is $O(n-k)$.
\end{appendix}
\end{document}